\newcommand{\sigmao}{\Sigma_{\mathsf{ortho}}}
\title{Parallel Computation of Alpha Complexes for Biomolecules} %TODO Please add
\author
{Talha Bin Masood\footnote{Corresponding author}}
{Scientific Visualization Group, Link\"oping University, Norrk\"oping, Sweden 
%\and Swedish e-Science Research Centre (SeRC)
}
{talha.bin.masood@liu.se}
{https://orcid.org/0000-0001-5352-1086}
{}
\author
{Tathagata Ray}
{BITS Pilani, Hyderabad Campus, Hyderabad, India}
{rayt@hyderabad.bits-pilani.ac.in}
{}
{}
\author
{Vijay Natarajan}
{Department of Computer Science and Automation, Indian Institute of Science, Bangalore, India}
{vijayn@iisc.ac.in}
{https://orcid.org/0000-0002-7956-1470}
{This work is partially supported by a Swarnajayanti Fellowship from the Department of Science and Technology, India (DST/SJF/ETA-02/2015-16); a Mindtree Chair research grant; and the Robert Bosch Centre for Cyber Physical Systems, IISc.}
\authorrunning{T.\,B. Masood, T. Ray, and V. Natarajan}
\keywords{Delaunay triangulation, parallel algorithms, biomolecules, GPU}
\begin{document}

\maketitle

%TODO mandatory: add short abstract of the document
\begin{abstract}
The alpha complex, a subset of the Delaunay triangulation, has been extensively used as the underlying representation for biomolecular structures. We propose a GPU-based parallel algorithm for the computation of the alpha complex, which exploits the knowledge of typical spatial distribution and sizes of atoms in a biomolecule. Unlike existing methods, this algorithm does not require prior construction of the Delaunay triangulation. The algorithm computes the alpha complex in two stages. The first stage proceeds in a bottom-up fashion and computes a superset of the edges, triangles, and tetrahedra belonging to the alpha complex. The false positives from this estimation stage are removed in a subsequent pruning stage to obtain the correct alpha complex. Computational experiments on several biomolecules demonstrate the superior performance of the algorithm, up to a factor of 50 when compared to existing methods that are optimized for biomolecules.
\end{abstract}

%\linenumbers
\section{Introduction}
The alpha complex of a set of points in $\mathbb{R}^3$ is a subset of the Delaunay triangulation. A size parameter $\alpha$ determines the set of simplices (tetrahedra, triangles, edges, and vertices) of the Delaunay triangulation that are included in the alpha complex.  It is an elegant representation of the shape of the set of points~\cite{edelsbrunner1983shape, edelsbrunner1994three, alphaShapesSurvey2010} and has found various applications, particularly in molecular modeling and molecular graphics. The atoms in a biomolecule are represented by weighted points in $\mathbb{R}^3$, and the region occupied by the molecule is represented by the union of balls centered at these points. The geometric shape of a biomolecule determines its function, namely how it interacts with other biomolecules. The alpha complex represents the geometric shape of the molecule very efficiently. It has been widely used for computing and studying geometric features such as cavities and channels~\cite{liang1998analytical2, liang1998anatomy, dundas2006castp, Masood2015, Sridharamurthy2016, masood2016integrated, cavityReview2016}. Further, an alpha complex based representation is also crucial for accurate computation of geometric properties like volume and surface area~\cite{liang1998analytical1, edelsbrunner2005geometry, koehl2011volumes}. 

Advances in imaging technology have resulted in a significant increase in the size of molecular structure data. This necessitates the development of efficient methods for storing, processing, and querying these structures. In this paper, we study the problem of efficient construction of the alpha complex with particular focus on point distributions that are typical of biomolecules. In particular, we present a parallel algorithm for computing the alpha complex and an efficient GPU implementation that outperforms existing methods. In contrast to existing algorithms, our algorithm does not require the explicit construction of the Delaunay triangulation. 

\subsection{Related work}
The Delaunay triangulation has been studied within the field of computational geometry for several decades and numerous algorithms have been proposed for its construction~\cite{aurenhammer2013book}. Below, we describe only a few methods that are most relevant to this paper. 

A tetrahedron belongs to the Delaunay triangulation of a set of points in $\mathbb{R}^3$ if and only if it satisfies the \emph{empty circumsphere} property, namely no point is contained within the circumsphere of the tetrahedron.
%A similar property holds for the case of weighted points.% 
The Bowyer-Watson algorithm~\cite{bowyer1981computing, watson1981computing} and the incremental insertion algorithm by Guibas~\textit{et al.}~\cite{guibas1992randomized} are based on the above characterization of the Delaunay triangulation. In both methods, points are inserted incrementally and the triangulation is locally updated to ensure that the Delaunay property is satisfied. The incremental insertion method followed by bi-stellar flipping works in higher dimensions also~\cite{edelsbrunner1996incremental} and can construct the Delaunay triangulation in $O(n\log n + n^{\left \lceil d/2 \right \rceil})$ time in the worst case, where $n$ is the number of input points in $\mathbb{R}^d$. A second approach for constructing the Delaunay triangulation is based on its equivalence to the convex hull of the points lifted onto a $(d+1)$-dimensional paraboloid~\cite{EdelsbrunnerHerbert1986}. 

A third divide-and-conquer approach partitions the  inputs points into two or generally multiple subsets, constructs the Delaunay triangulation for each partition, and merges the pieces of the triangulation finally. The merge procedure depends on the ability to order the edges incident on a vertex and hence works only in $\mathbb{R}^2$. The extension to $\mathbb{R}^3$ requires that the merge procedure be executed first~\cite{cignoni1998dewall}. The divide-and-conquer strategy directly extends to a parallel algorithm~\cite{Ashwin2012GPUDelaunay, cao2014gpu}. The DeWall algorithm~\cite{cignoni1998dewall} partitions the input point set into two halves and first constructs the triangulation of points lying within the boundary region of the two partitions. The Delaunay triangulation of the two halves is then constructed in parallel. The process is repeated recursively resulting in increased parallelism. Cao~\textit{et al.}~\cite{cao2014gpu} have developed a GPU parallel algorithm, \emph{gDel3D}, that constructs the Delaunay triangulation in two stages. In the first stage, points are inserted in parallel followed by flipping  to obtain an approximate Delaunay triangulation. In the second stage, a star splaying procedure works locally to convert non-Delaunay tetrahedra into Delaunay tetrahedra. The algorithm can be extended to construct the weighted Delaunay triangulation for points with weights. Cao~\textit{et al.} report a speed up of up to a factor of $10$ over a sequential implementation for constructing the weighted Delaunay triangulation of 3 million weighted points. 

Existing algorithms for constructing the alpha complex~\cite{edelsbrunner1992weighted, edelsbrunner1994three, koehl2011volumes, cgal:alpha3D} often require that the Delaunay triangulation be computed in a first step, with the exception of a recent method that guarantees output sensitive construction under mild assumptions on weights~\cite{Sheehy15} or a possible construction from \v{C}ech complexes~\cite{bauer2014morse}. Simplices that belong to the alpha complex are identified using a size filtration in a second step. Simplices that belong to the alpha complex are identified using a size filtration in a second step. In the case of biomolecules, only small values of the 
size parameter are of interest and the number of simplices in the alpha complex is a small fraction of those contained in the Delaunay triangulation. Hence, the Delaunay triangulation construction is often the bottleneck in the alpha complex computation. The key difficulty lies in the absence of a direct characterization of simplices that belong to the alpha complex. 

\subsection{Summary of results}
We propose an algorithm that avoids the expensive Delaunay triangulation computation and instead directly computes the alpha complex for biomolecules. The key contributions of this paper are summarized below:
\begin{itemize}
\item A new characterization of the alpha complex -- a set of conditions necessary and sufficient for a simplex to be a part of the alpha complex.
\item A new algorithm for computing the alpha complex of a set of weighted points in $\mathbb{R}^3$. The algorithm identifies simplices of the alpha complex in decreasing order of dimension without computing the complete weighted Delaunay triangulation.
\item An efficient CUDA-based parallel implementation of this algorithm for biomolecular data that can compute the alpha complex for a 10 million point dataset in approximately 10 seconds.
\item A proof of correctness of the algorithm and comprehensive experimental validation to demonstrate that it outperforms existing methods.
\end{itemize}
While the experimental results presented here focus on biomolecular data, the algorithm is applicable to data from other application domains as well. In particular, the efficient GPU implementation may be used for points that arise in smoothed particle hydrodynamics (SPH) simulations, atomistic simulations in material science, and particle systems that appear in computational fluid dynamics (CFD).

\section{Background}
In this section, we review the necessary background on Delaunay triangulations required to describe the algorithm and also establish a new characterization of the alpha complex that does not require the Delaunay triangulation. For a detailed description of Delaunay triangulations, alpha complexes, and related structures, we refer the reader to various books on the topic~\cite{aurenhammer2013book, edelsbrunnerMeshBook2001, edelsbrunner2010Computational}. 

\begin{figure}[ht!]	
\centering	
\subcaptionbox{\label{fig:disks}}{	
    \includegraphics[width=4.5cm]{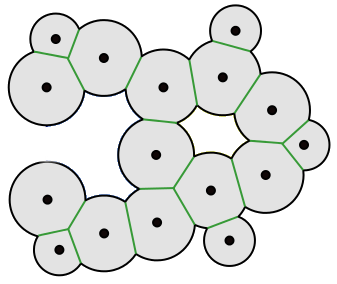}	
}	
\qquad	
\subcaptionbox{\label{fig:PD1}}{	
    \includegraphics[width=4.5cm]{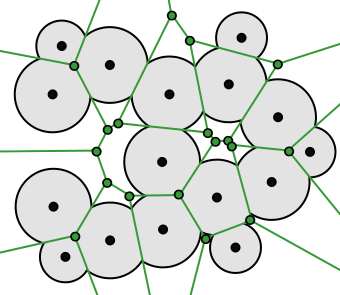}	
}	
\\	
\subcaptionbox{\label{fig:PD2}}{	
    \includegraphics[width=4.5cm]{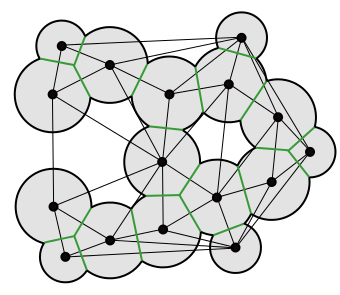}	
}	
\qquad	
\subcaptionbox{\label{fig:alpha0}}{	
    \includegraphics[width=4.5cm]{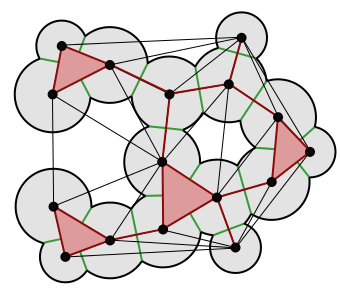}	
}	
\\	
\subcaptionbox{\label{fig:alphalater}}{	
    \includegraphics[width=4.5cm]{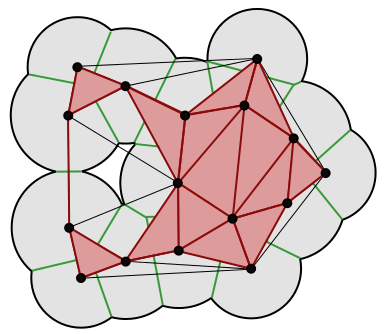}	
}	
\caption[2D illustration of alpha complex]{	
2D weighted Delaunay triangulation and alpha complex.	
\subref{fig:disks}~A set of weighted points $B$ in $\mathbb{R}^2$ shown as disks.	
\subref{fig:PD1}~The weighted Voronoi diagram of $B$. Voronoi edges and vertices are highlighted in green. 	
\subref{fig:PD2}~The weighted Delaunay complex is the dual of the weighted Voronoi diagram.	
\subref{fig:alpha0}~The alpha complex $K_\alpha$ for $\alpha = 0$ is shown in red. This is the dual of the intersection of the weighted Voronoi diagram and union of balls. 	
\subref{fig:alphalater}~The alpha complex shown for an $\alpha>0$. It is the dual of the intersection of the weighted Voronoi diagram and union of balls after growing them to have radius $\sqrt{r_i^2+\alpha}$.	
}	
\label{fig:background}	
\end{figure}

Let $B = \{b_i\}$ denote a set of balls or weighted points, where $b_i = (p_i , r_i)$ represents a ball centered at $p_i$ with radius $r_i$. We limit our discussion to balls in $\mathbb{R}^3$, so $p_i =(x_i , y_i , z_i) \in \mathbb{R}^3$. Further, we assume that the points in $B$ are in general position, \emph{i.e.}, no two points have the same location, no three points are collinear, no four points are coplanar, and no subset of five points are equidistant from a point in $\mathbb{R}^3$. Such configurations are called \emph{degeneracies}. In practice, a degenerate input can be handled via symbolic perturbation~\cite{edelsbrunner1990simulation}.

\subsection{Simplex and simplicial complex}
A $d$-dimensional \emph{simplex} $\sigma^d$ is defined as the convex hull of $d+1$ affinely independent points. Assuming the centres of balls in $B$ are in general position, all $(d + 1)$ sized subsets of $B$ form a simplex $\sigma^d=(p_0^\sigma,p_1^\sigma,\cdots,p_d^\sigma)$. For simplicity, we sometimes use $b_i$ instead of the center $p_i$ to refer to points incident on a simplex. For example, we may write $\sigma^d=(b_0^\sigma,b_1^\sigma,\cdots,b_d^\sigma)$. 

A non-empty strict subset of $\sigma^d$ is also a simplex but with dimension smaller than $d$. Such a simplex is called a \emph{face} of $\sigma^d$. Specifically, a $(d-1)$-dimensional face of $\sigma^d$ is referred to as a \emph{facet} of $\sigma^d$. A set of simplices $K$ is called a \emph{simplicial complex} if: 1)~a simplex $\sigma \in K$ implies that all faces of $\sigma$ also belong to $K$, and 2)~for two simplices $\sigma_1$, $\sigma_2 \in K$, either $\sigma_1\cap\sigma_2\in K$ or $\sigma_1\cap\sigma_2 = \emptyset$.

\subsection{Power distance and weighted Voronoi diagram}
The \emph{power distance} $\pi(p, b_i)$ between a point $p\in\mathbb{R}^3$ and a ball $b_i = (p_i, r_i) \in B$ is defined as
$$\pi(p, b_i) = \Vert p - p_i\Vert^2-r_i^2.$$
The \emph{weighted Voronoi diagram} is an extension of the Voronoi diagram to weighted points. It is a partition of $\mathbb{R}^3$ based on proximity to input balls $b_i$ in terms of the power distance. Points $p \in \mathbb{R}^3$ that are closer to the ball $b_i$ compared to all other balls $b_j \in B$ ($j \neq i$) constitute the \emph{Voronoi region} of $b_i$. Points equidistant from two balls $b_i$, $b_j \in B$ and closer to these two balls compared to other balls constitute a \emph{Voronoi face}. Similarly, points equidistant from three balls and fours balls constitute \emph{Voronoi edges} and \emph{Voronoi vertices} of the weighted Voronoi diagram, respectively. Figure~\ref{fig:PD1} shows the weighted Voronoi diagram for a set of 2D weighted points or disks on the plane. Similar to the unweighted case, the Voronoi regions of the weighted Voronoi diagram are convex and linear. However, the weights may lead to a configuration where the Voronoi region of $b_i$ is disjoint from $b_i$. This occurs when $b_i$ is contained within another ball $b_j$. Further, the Voronoi region of $b_i$ may even be empty.

\subsection{Weighted Delaunay triangulation}
The \emph{weighted Delaunay triangulation} is the dual of the weighted Voronoi diagram, see Figure~\ref{fig:PD2}. It is a simplicial complex consisting of simplices that are dual to the cells of the weighted Voronoi diagram. The following equivalent definition characterizes a simplex $\sigma^d$ belonging to a Delaunay triangulation $D$.

\begin{definition}[Weighted Delaunay Triangulation] A simplex $\sigma^d$ = $(p_0^\sigma,p_1^\sigma,\cdots,p_d^\sigma),\,  0 \leq d \leq 3$, belongs to the weighted Delaunay triangulation $D$ of $B$ if and only if there exists a point $p \in \mathbb{R}^3$ such that
\begin{description}
\item[DT1:] $\pi(p, b_0^\sigma)=\pi(p, b_1^\sigma)=\cdots=\pi(p, b_d^\sigma)$, and
\item[DT2:] $\pi(p, b_0^\sigma)\leq \pi(p, b_i)$ for $b_i \in B-\sigma^d$.
\end{description}
\end{definition}

\begin{figure}[hb!]	
\centering	
\subcaptionbox{\label{fig:selDiscs}}{\includegraphics[width=6cm]{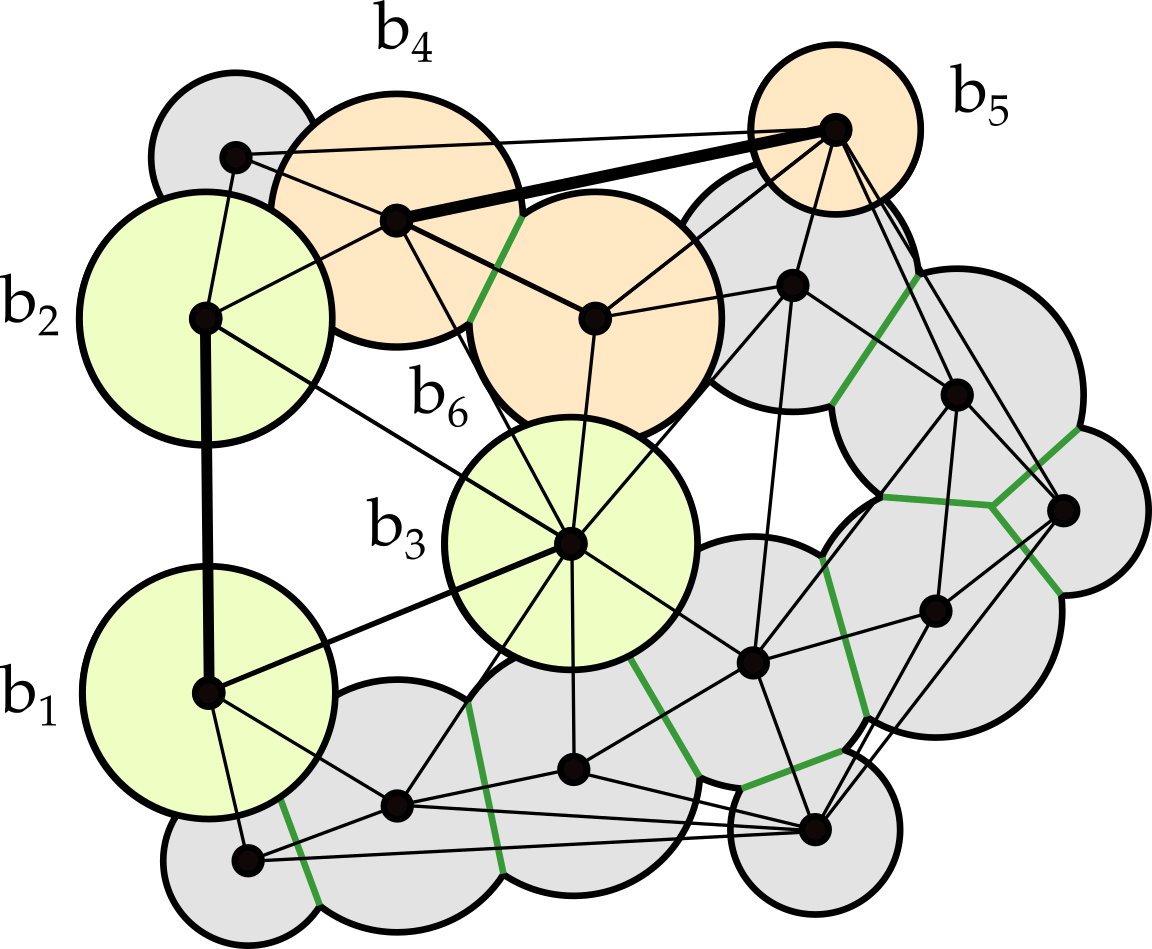}	
}	
\\	
\subcaptionbox{\label{fig:case1}}{\includegraphics[width=6cm]{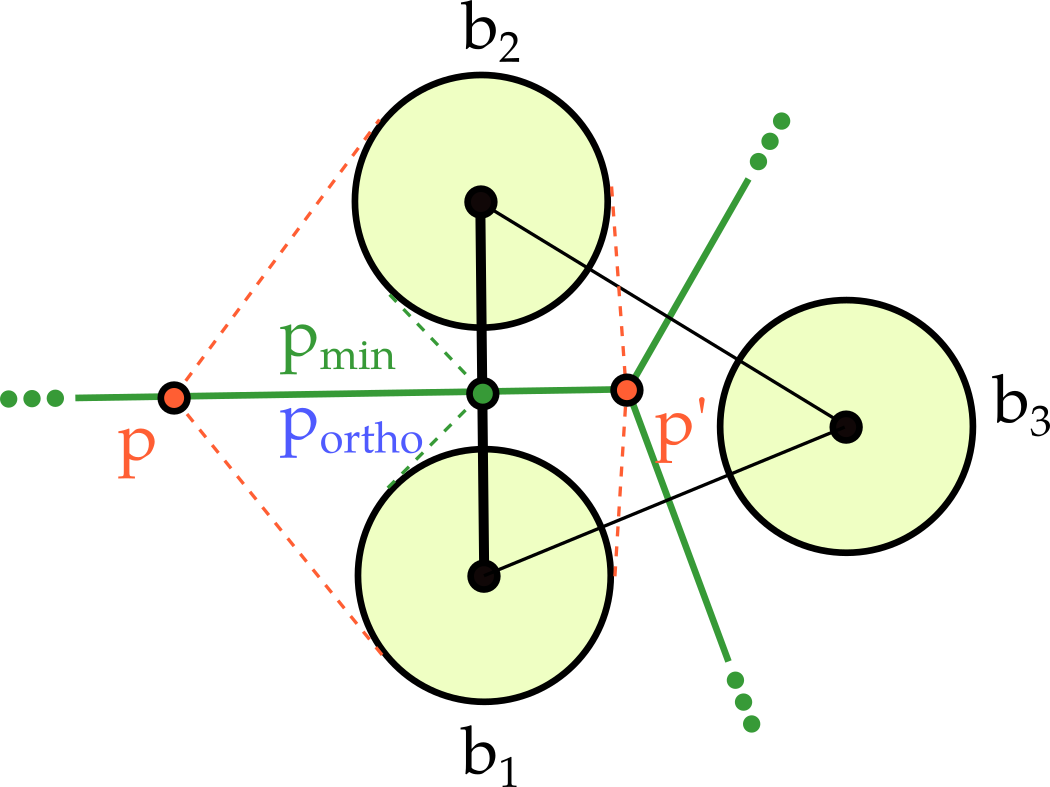}}	
\qquad	
\subcaptionbox{\label{fig:case2}}{\includegraphics[width=5cm]{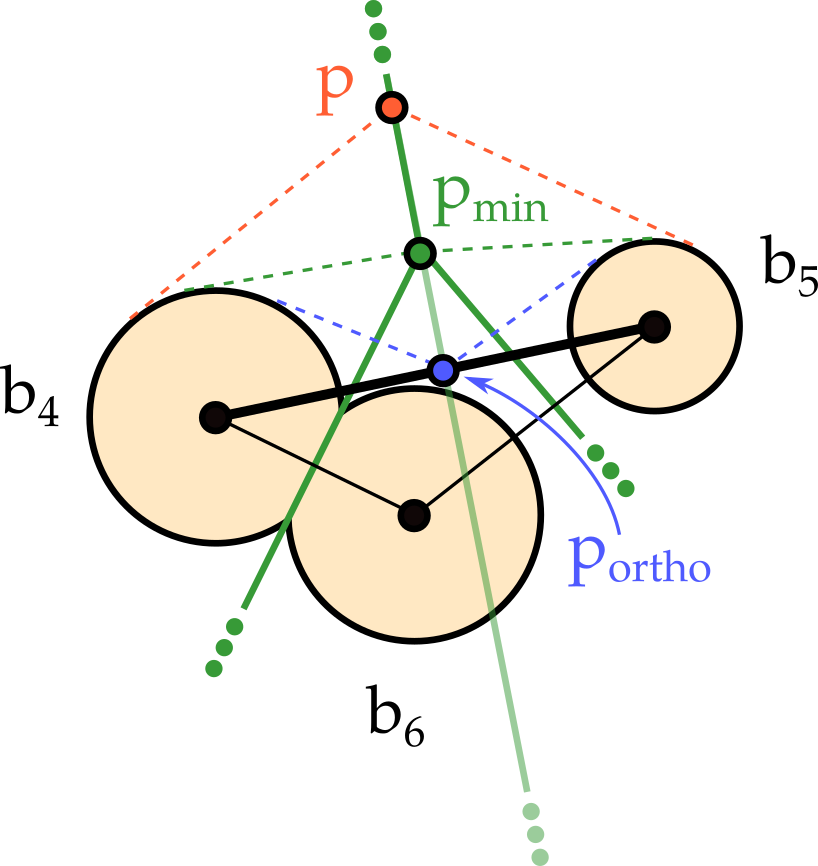}}	
\caption[The two cases]{$\mathsf{Size}$ and $\mathsf{OrthoSize}$ of a simplex. 
\subref{fig:selDiscs}~A set $B$ of weighted points. Two edges~(bold) belong to the Delaunay triangulation. 
\subref{fig:case1}~The $\mathsf{Size}$ of edge $b_1b_2$ is equal to its $\mathsf{OrthoSize}$. Points $p$, $p'$, $p_{\mathsf{min}}$ and $p_{\mathsf{ortho}}$ are \emph{witnesses}. Each one is equidistant from $b_1$ and $b_2$ and farther away from other disks in $B$. The distance is proportional to the length of the tangent to the disk that represents the weighted point. The next closest disk from these points is $b_3$. In this case, $p_{\mathsf{min}}$ and $p_{\mathsf{ortho}}$ coincide and hence $\mathsf{Size} = \mathsf{OrthoSize}$.
\subref{fig:case2}~$b_4b_5$ is also a Delaunay edge. The location of a neighboring disk $b_6$ could lead to a different configuration. The point $p_{\mathsf{ortho}}$ is closest to $b_4$ and $b_5$ among all the points that are equidistant from both. However $p_{\mathsf{ortho}}$ is closer to $b_6$ as compared to $b_4$ and $b_5$. The closest point $p_{\mathsf{min}}$ that satisfies DT1 and DT2 is farther away, hence for $b_4b_5$ $\mathsf{Size}$ 
is greater than $\mathsf{OrthoSize}$.	
}	
\label{fig:twoCases}	
\end{figure}

A point $p$ that satisfies the above two conditions, DT1 and DT2, is called a \emph{witness} for $\sigma^d$. We call a point that minimizes the distance $\pi(p, b_0^\sigma)$ and satisfies both conditions as the \emph{closest witness}, denoted by $p_{\mathsf{min}}^\sigma$. This minimum distance $\pi(p_{\mathsf{min}}^\sigma, b_0^\sigma)$ is called the $\mathsf{Size}$ of the simplex $\sigma^d$. A point that minimizes the distance $\pi(p, b_0^\sigma)$ and satisfies DT1 is called the $ortho\text{-}center$~$p_{\mathsf{ortho}}^\sigma$ of simplex $\sigma^d$. The distance $\pi(p_{\mathsf{ortho}}^\sigma, b_0^\sigma)$ is called the $\mathsf{OrthoSize}$ of the simplex $\sigma^d$. Clearly, the $\mathsf{Size}$ of a simplex is lower bounded by its $\mathsf{OrthoSize}$. Figure~\ref{fig:twoCases} shows the two possible scenarios, namely when $\mathsf{OrthoSize}=\mathsf{Size}$ and $\mathsf{OrthoSize}<\mathsf{Size}$.

\subsection{Alpha complex} Given a parameter $\alpha \in \mathbb{R}$, we can construct a subset of the weighted Delaunay triangulation by filtering simplices whose $\mathsf{Size}$ is less than or equal to $\alpha$, see Figures~\ref{fig:alpha0}~and~\ref{fig:alphalater}. The resulting subset, called the \emph{alpha complex}, is a subcomplex of the Delaunay complex and is denoted $K_\alpha$:  $$K_\alpha = \{\sigma^d \in D \quad \text{such~that} \quad \mathsf{Size}(\sigma^d) \leq \alpha\}.$$
The following equivalent definition characterizes simplices of the alpha complex without explicitly referring to the Delaunay triangulation. 
\begin{definition}[Alpha complex] A $d$-dimensional simplex $\sigma^d$ = $(p_0^\sigma,p_1^\sigma,\cdots,p_d^\sigma),  0 \leq d \leq 3$, belongs to the alpha complex $K_\alpha$  of $B$ if and only if there exists a point $p \in \mathbb{R}^3$ such that the following three conditions are satisfied:
\begin{description}
\item[AC1:] $\pi(p, b_0^\sigma)=\pi(p, b_1^\sigma)=\cdots=\pi(p, b_d^\sigma)$,
\item[AC2:] $\pi(p, b_0^\sigma)\leq \pi(p, b_i)$ for $b_i \in B-\sigma^d$, and
\item[AC3:] $\pi(p, b_0^\sigma)\leq \alpha$ or equivalently, the $\mathsf{Size}$ of $\sigma^d$ is at most $\alpha$.
\end{description}
\end{definition}

\section{Algorithm}
\label{sec:alphaAlgo}
We now describe an algorithm to compute the alpha complex and prove its correctness. The algorithm utilizes the characterizing conditions introduced above. It first identifies the tetrahedra that belong to the alpha complex, followed by the set of triangles, edges and vertices. Figure~\ref{fig:algoDemo} illustrates the algorithm as applied to disks on the plane.
 
\subsection{Outline}
The alpha complex of a point set in $R^3$ consists of simplices of dimensions 0--3, $K_\alpha = K_\alpha^0 \cup K_\alpha^1 \cup K_\alpha^2 \cup K_\alpha^3$, where $K_\alpha^d \subset K_\alpha$ is the set of $d$-dimensional simplices in $K_\alpha$. We initialize $K_\alpha^d = \emptyset$ and construct $K_\alpha$ in five steps described below:
\begin{description}
\item[Step 1:] For $0 \leq d \leq 3$, compute the set of all simplices $\sigma^d$ such that $\mathsf{Ortho\mathsf{Size}}(\sigma^d)\leq \alpha$. Let this set be denoted by $\sigmao = \sigmao^0 \cup \sigmao^1 \cup \sigmao^2 \cup \sigmao^3$.
\item[Step 2:] For all tetrahedra $\sigma^3 \in \sigmao^3$, check condition AC2 using $p = p_{\mathsf{ortho}}^\sigma$. If $\sigma^3$ satisfies AC2 then insert it into $K_\alpha^3$.  
\item[Step 3:] Insert all triangles that are incident on tetrahedra in $K_\alpha^3$ into $K_\alpha^2$. Let $\Sigma_{\mathsf{free}}^2=\sigmao^2-\mathsf{Facets}(K_\alpha^3)$, where $\mathsf{Facets}(K_\alpha^3)$ 
denotes the set of facets of tetrahedra in $K_\alpha^3$. For all triangles $\sigma^2 \in \Sigma_{\mathsf{free}}^2$, check condition AC2 using $p = p_{\mathsf{ortho}}^\sigma$. If $\sigma^2$ satisfies AC2 then insert it into $K_\alpha^2$. 
\item[Step 4:] Insert all edges incident on triangles in $K_\alpha^2$ into $K_\alpha^1$. Let $\Sigma_{\mathsf{free}}^1=\sigmao^1-\mathsf{Facets}(K_\alpha^2)$, where $\mathsf{Facets}(K_\alpha^2)$ denotes the set of facets of triangles in $K_\alpha^2$. For all edges $\sigma^1 \in \Sigma_{\mathsf{free}}^1$, check condition AC2 using $p = p_{\mathsf{ortho}}^\sigma$. If $\sigma^1$ satisfies AC2 then insert it into $K_\alpha^1$.
\item[Step 5:] Insert all endpoints of edges in $K_\alpha^1$ into $K_\alpha^0$. Let $\Sigma_{\mathsf{free}}^0=\sigmao^0-\mathsf{Facets}(K_\alpha^1)$, where $\mathsf{Facets}(K_\alpha^1)$ denotes the set of balls incident on edges in $K_\alpha^1$. For all balls $b_i =(p_i,r_i) \in \Sigma_{\mathsf{free}}^0$, check condition AC2 using $p = p_i$. If $p_i$ satisfies AC2 then insert it into $K_\alpha^0$.
\end{description}

Step~1 selects simplices that satisfy AC3. Step~2 recognizes tetrahedra that belong to the alpha complex by checking AC2 using $p = p_{\mathsf{ortho}}^\sigma$. Triangle faces of these tetrahedra also belong to $K_\alpha$. The other ``free'' triangles belong to $K_\alpha^2$ if they satisfy AC2. Step~4 identify edges similarly. First all edge faces of triangles in $K_\alpha^2$ are inserted followed by those ``free'' edges that satisfy AC2. Vertices are identified similarly in Step~5.

A notion related to free simplices, called \textit{unattached} simplices, was introduced by Edelsbrunner~\cite{edelsbrunner1992weighted}. However, the characterization of unattached simplices depends on the fact that they belong to the Delaunay complex.
\begin{figure}[ht!]
\centering
\subcaptionbox{\label{fig:inputAlpha}}{	
    \includegraphics[width=4cm]{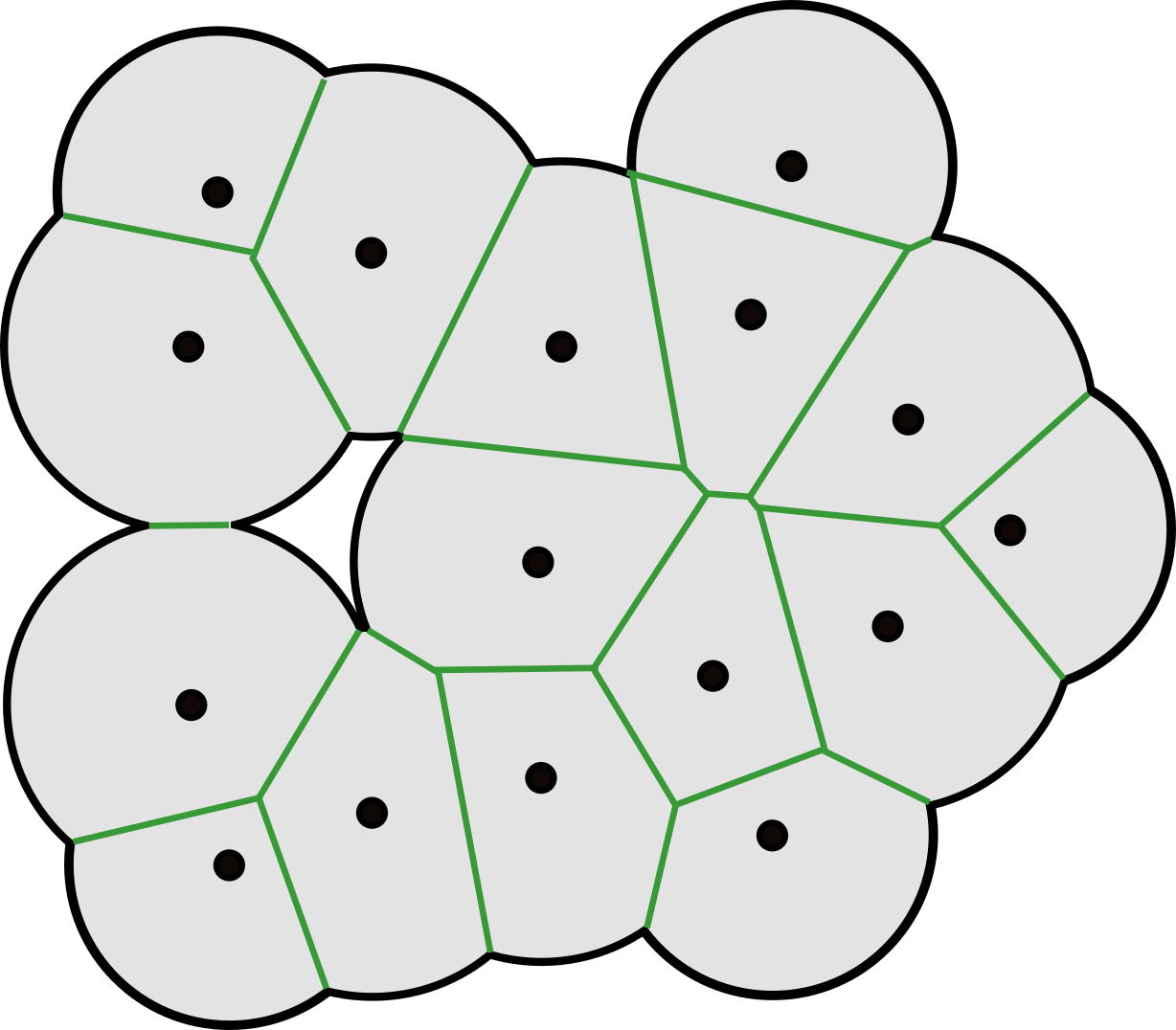}	
}	
\qquad	
\subcaptionbox{\label{fig:initAlphaEdges}}{	
    \includegraphics[width=4cm]{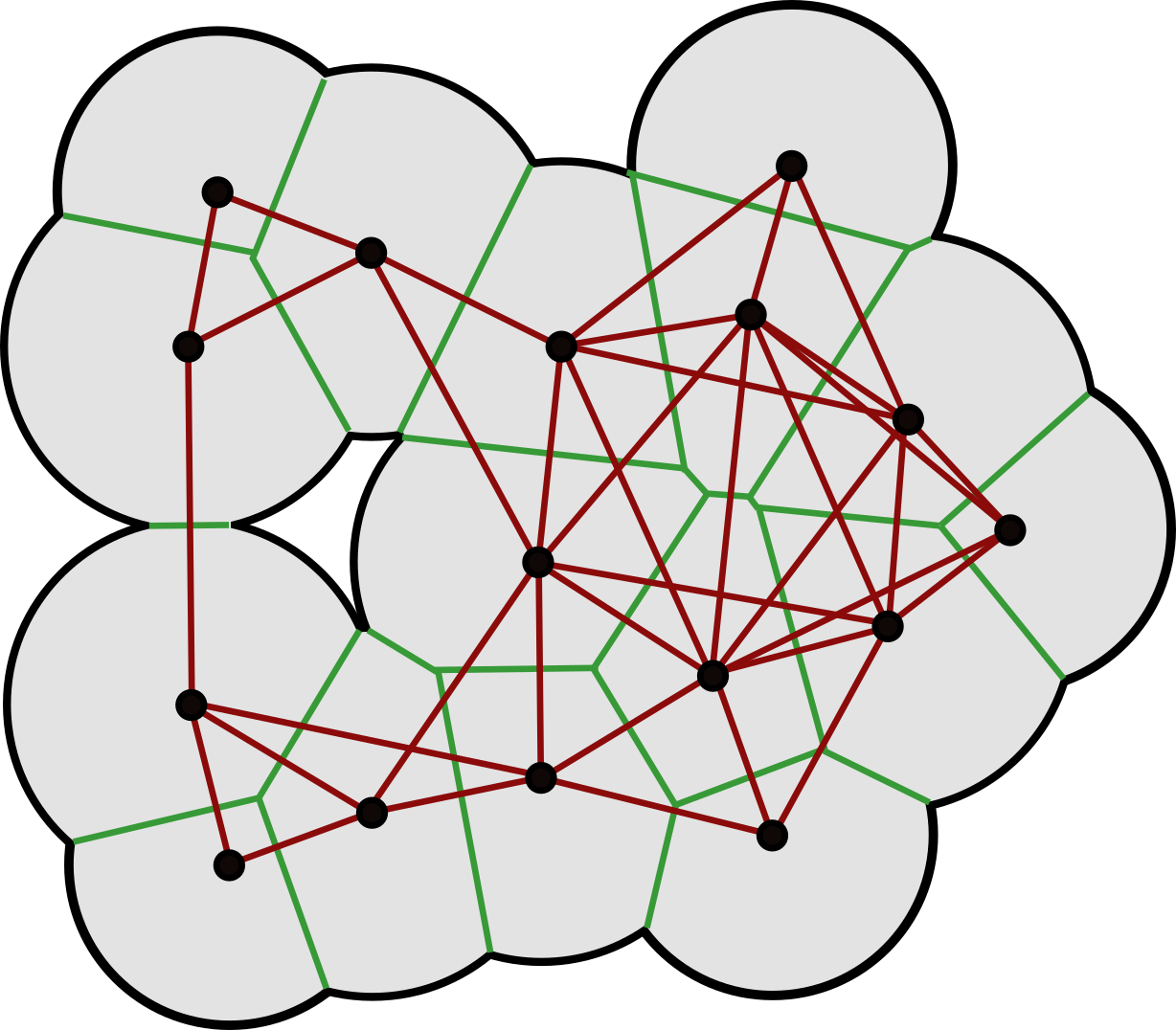}	
}	
\\%	
\subcaptionbox{\label{fig:alphaTris}}{	
    \includegraphics[width=4cm]{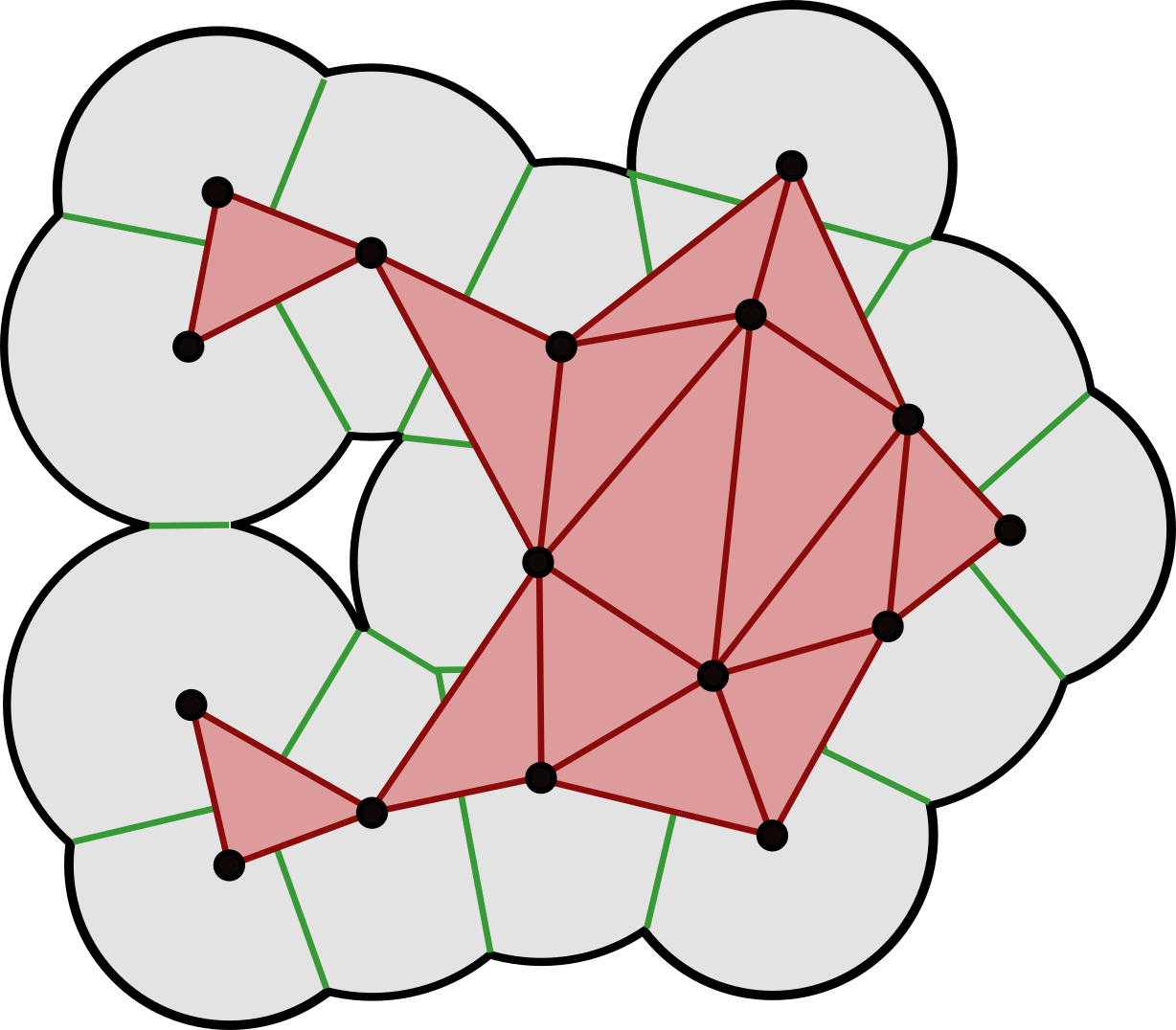}	
}	
\qquad	
\subcaptionbox{\label{fig:danglingEdges}}{	
    \includegraphics[width=4cm]{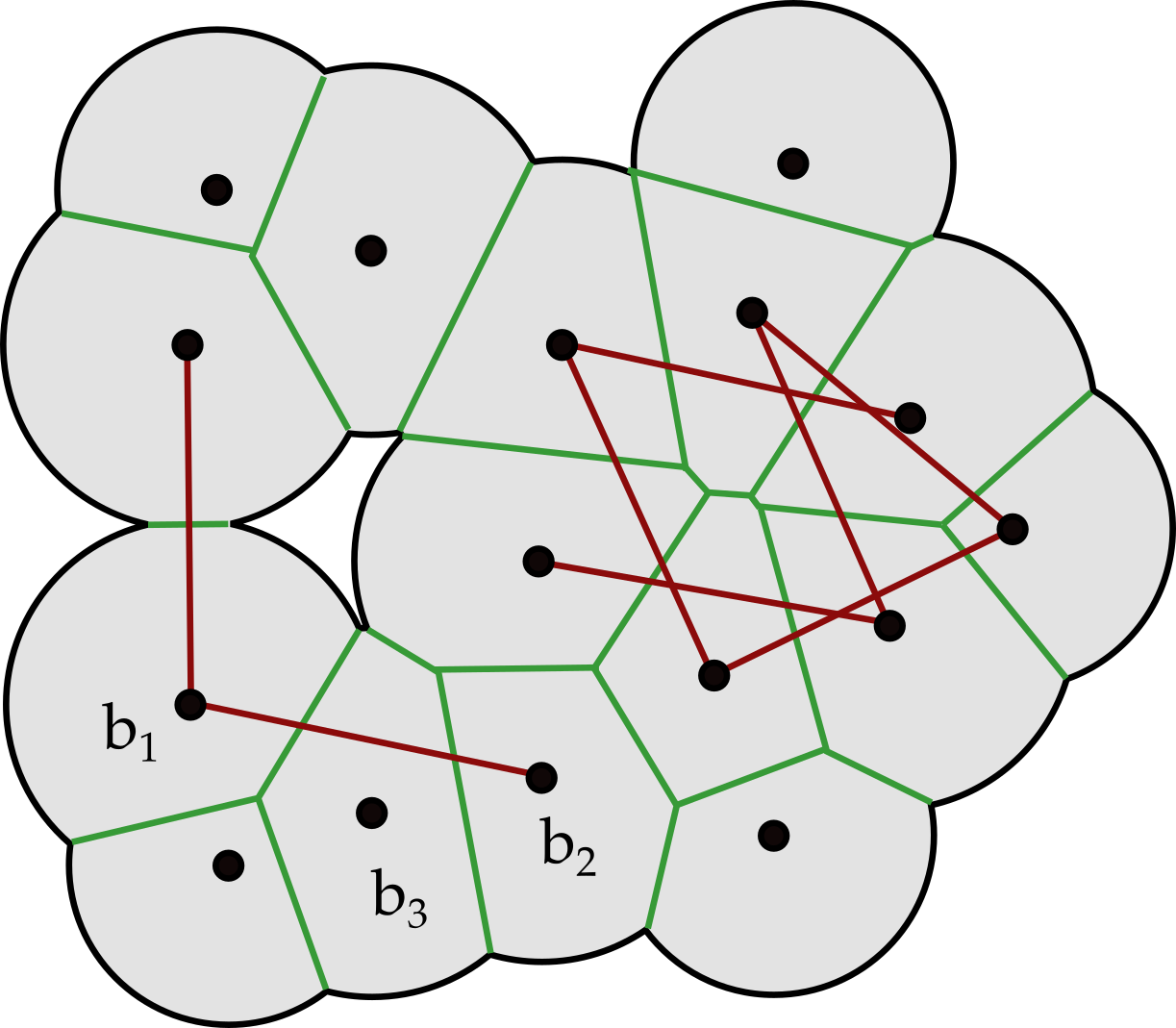}	
}	
\\%	
\subcaptionbox{\label{fig:trueEdge}}{	
    \includegraphics[width=4cm]{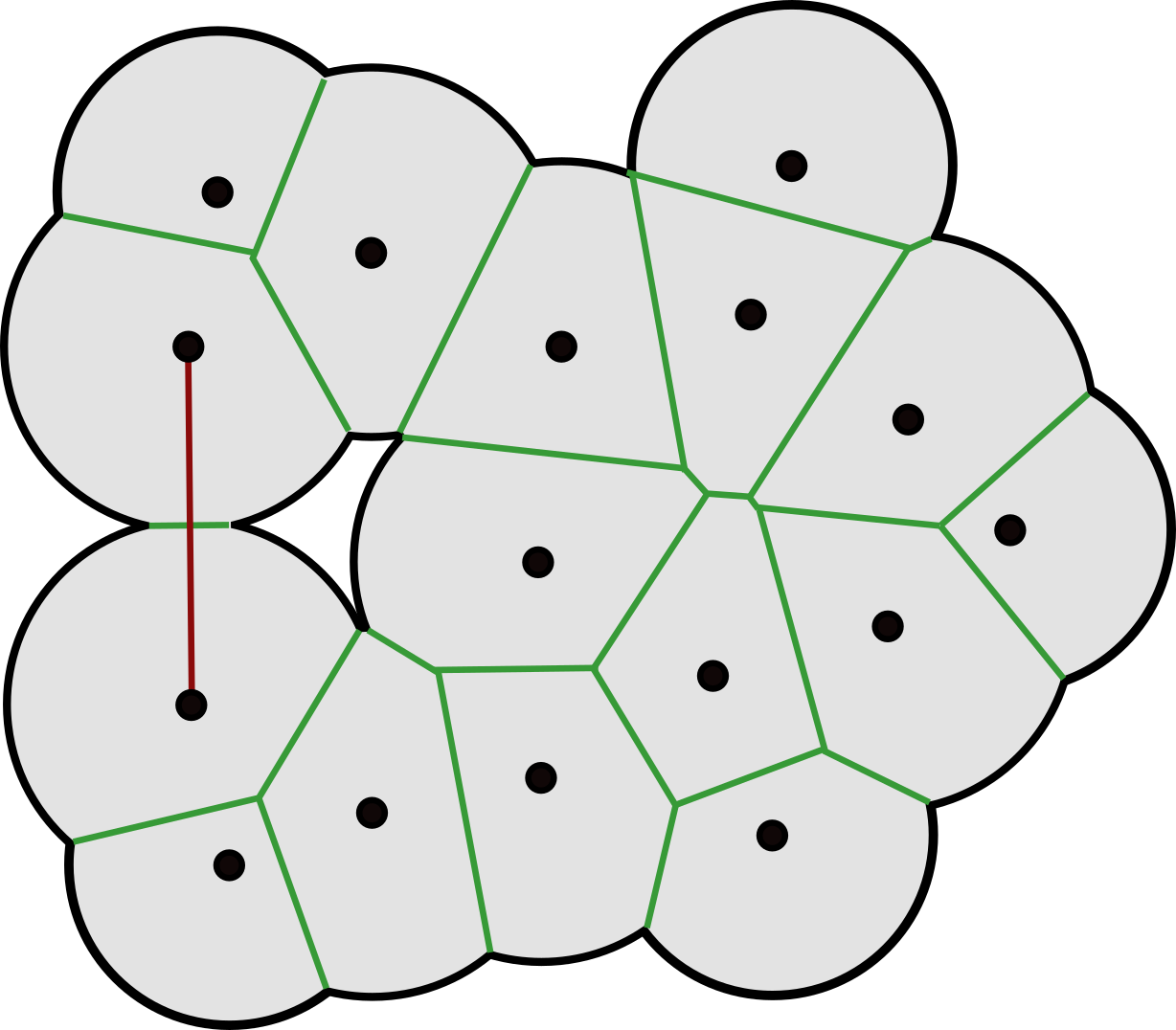}	
}	
\qquad    	
\subcaptionbox{\label{fig:finalAlpha}}{	
    \includegraphics[width=4cm]{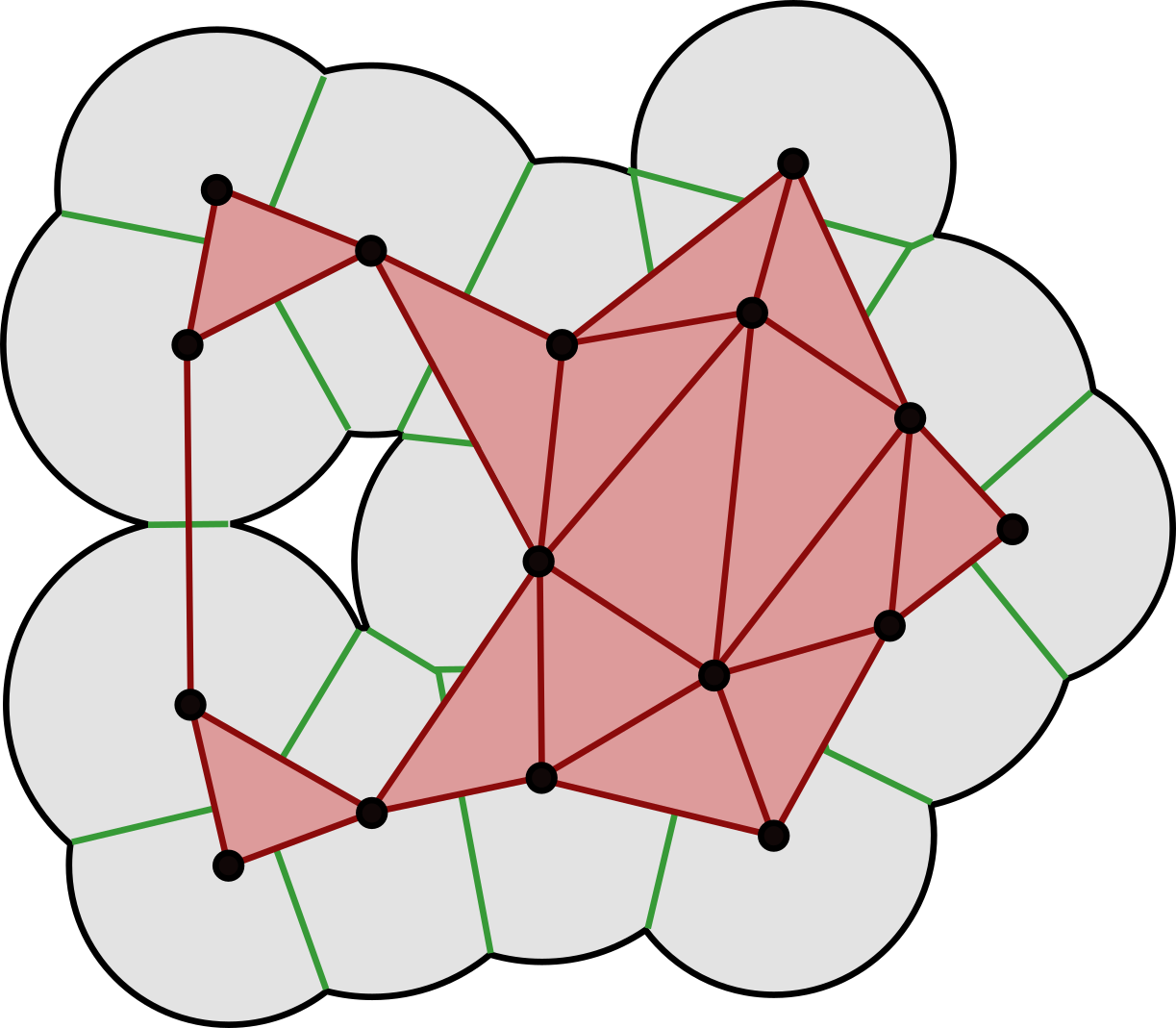}	
}
\caption[2D illustration of algorithm]{Illustration of the proposed algorithm in 2D.
\subref{fig:inputAlpha}~The set of disks $B$ grown by the parameter $\alpha$.
\subref{fig:initAlphaEdges}~First, compute the set of edges $\sigmao^1$ whose $\mathsf{OrthoSize} \leq \alpha$~(red). The triangles $\sigmao^2$ that satisfy this condition are also computed but they are not shown here.
\subref{fig:alphaTris}~Next, identify the triangles that satisfy AC2~(red).
\subref{fig:danglingEdges}~Collect edges in $\sigmao^1$ that are not incident on triangles in $K_\alpha^2$ into $\Sigma_{\mathsf{free}}^1$. Check if these edges satisfy AC2 with $p = p_{\mathsf{ortho}}^\sigma$. For example, the edge $b_1b_2$ does not satisfy this condition because $b_3$ is closer to $p_{\mathsf{ortho}}$ than $b_1$ and $b_2$.
\subref{fig:trueEdge}~One edge survives the AC2 check and thus belongs to $K_\alpha$. 
\subref{fig:finalAlpha}~The alpha complex is obtained as the union of $K_\alpha^2$, $K_\alpha^1$ and $K_\alpha^0$.
}
\label{fig:algoDemo}
\end{figure}

\subsection{Proof of correctness}
We now prove that that the algorithm described above correctly computes the alpha complex of the given set of weighted points by proving the following four claims. Each claim states that the set of simplices computed in Steps~2, 3, 4 and~5 are exactly the simplices belonging to the alpha complex. We assume that the input is non-degenerate. 
\begin{claim}
\bf Step~2 computes $K_\alpha^3$ correctly.
\end{claim}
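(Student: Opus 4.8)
The plan is to reduce the claim to a single geometric fact that is special to top-dimensional simplices in $\mathbb{R}^3$: for a tetrahedron, the ortho-center is the \emph{only} point satisfying AC1, and consequently $\mathsf{Size}(\sigma^3)=\mathsf{OrthoSize}(\sigma^3)$. To establish this I would write out AC1 as the system $\pi(p,b_0^\sigma)=\pi(p,b_i^\sigma)$ for $i=1,2,3$. Expanding $\pi(p,b_i)=\Vert p-p_i\Vert^2-r_i^2$, the quadratic term $\Vert p\Vert^2$ cancels, so each equation is linear in $p$ and defines the radical (power) hyperplane of the pair $(b_0^\sigma,b_i^\sigma)$. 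By the general-position assumption the centers $p_0^\sigma,\dots,p_3^\sigma$ are affinely independent, so these three hyperplanes meet in exactly one point, namely $p_{\mathsf{ortho}}^\sigma$. Hence the feasible set of AC1 is the singleton $\{p_{\mathsf{ortho}}^\sigma\}$, and minimizing $\pi(p,b_0^\sigma)$ over the points satisfying AC1 (with or without the additional constraint AC2) trivially returns $p_{\mathsf{ortho}}^\sigma$; in particular $\mathsf{Size}(\sigma^3)=\mathsf{OrthoSize}(\sigma^3)$.

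With this in hand I would prove the two inclusions. For soundness, suppose Step~2 inserts $\sigma^3$. Then $\sigma^3\in\sigmao^3$, so AC3 holds because $\mathsf{Size}(\sigma^3)=\mathsf{OrthoSize}(\sigma^3)\le\alpha$, and the test guarantees that $p=p_{\mathsf{ortho}}^\sigma$ satisfies AC2. Since $p_{\mathsf{ortho}}^\sigma$ satisfies AC1 by definition, it is a witness realizing AC1--AC3, so $\sigma^3\in K_\alpha^3$. For completeness, suppose $\sigma^3\in K_\alpha^3$. Then some witness $p$ satisfies AC1--AC3; by the uniqueness above, $p=p_{\mathsf{ortho}}^\sigma$. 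Applying AC3 at this point gives $\mathsf{OrthoSize}(\sigma^3)=\pi(p_{\mathsf{ortho}}^\sigma,b_0^\sigma)\le\alpha$, so $\sigma^3\in\sigmao^3$ and is examined in Step~2; AC2 holding at $p=p_{\mathsf{ortho}}^\sigma$ is exactly the test that Step~2 performs, so $\sigma^3$ is inserted. The two inclusions together establish the claim.

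The main obstacle --- and really the only non-routine part --- is establishing the uniqueness of the AC1 solution for tetrahedra and spelling out why it forces $\mathsf{Size}=\mathsf{OrthoSize}$. This is precisely what fails for lower-dimensional simplices, where the AC1 locus is a flat of positive dimension and the ortho-center need not be a witness even when one exists (the $\mathsf{OrthoSize}<\mathsf{Size}$ case of Figure~\ref{fig:case2}); that subtlety is what forces the extra ``free simplex'' bookkeeping in Steps~3--5 and is the reason Step~2 can get away with a single AC2 test at the ortho-center. I would take care to invoke general position correctly --- affine independence of the four centers is what guarantees the three radical hyperplanes meet in one point --- and to note that if $p_{\mathsf{ortho}}^\sigma$ fails AC2 then $\sigma^3$ has no witness at all, hence $\sigma^3\notin D$ and a fortiori $\sigma^3\notin K_\alpha$, which is consistent with Step~2 correctly rejecting it.
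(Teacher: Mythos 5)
Your proof is correct and takes essentially the same route as the paper's: both hinge on the single key fact that for a tetrahedron the ortho-center is the \emph{unique} point satisfying AC1, hence the only candidate witness, which forces $\mathsf{Size}(\sigma^3)=\mathsf{OrthoSize}(\sigma^3)$ whenever a witness exists. You merely spell out what the paper leaves implicit --- the radical-plane/general-position argument for uniqueness and the explicit completeness direction (every tetrahedron of $K_\alpha^3$ lies in $\sigmao^3$ and passes the AC2 test) --- so no substantive difference in approach.
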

\begin{proof}
For a tetrahedron $\sigma^3$, $p_{\mathsf{ortho}}^\sigma$ is the only point that satisfies condition AC1. In Step~2 of the proposed algorithm, we check if AC2 holds for $p_{\mathsf{ortho}}^\sigma$. If yes, then $p_{\mathsf{ortho}}^\sigma$ is a witness for $\sigma^3$, \emph{i.e.}, $p_{\mathsf{ortho}}^\sigma = p_{\mathsf{min}}^\sigma$. Further, since $\mathsf{OrthoSize}(\sigma^3)\leq \alpha$ and $p_{\mathsf{ortho}}^\sigma = p_{\mathsf{min}}^\sigma$, we have $\mathsf{Size}(\sigma^3)\leq \alpha$ thereby satisfying AC3. Therefore, $\sigma^3$ belongs to $K_\alpha^3$ because it satisfies all three conditions.
\end{proof}

We now prove that the algorithm correctly identifies the triangles of the alpha complex. 

\begin{lemma}
A triangle $\sigma^2 \in \Sigma_{\mathsf{free}}^2$ belongs to $K_\alpha^2$ if and only if it satisfies AC2 with $p = p_{\mathsf{ortho}}^\sigma$.
\label{lemma:danglingTri}
\end{lemma}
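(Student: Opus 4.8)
The plan is to prove both directions of the biconditional, with the ``if'' direction being essentially immediate and the ``only if'' direction carrying all the geometric content. For the ``if'' direction I would observe that whenever $p_{\mathsf{ortho}}^\sigma$ satisfies AC2 it is automatically a witness: it satisfies AC1 by the very definition of the ortho-center, AC2 by hypothesis, and AC3 because $\sigma^2 \in \sigmao^2$ gives $\mathsf{OrthoSize}(\sigma^2) \le \alpha$, while $p_{\mathsf{ortho}}^\sigma$ being a witness forces $p_{\mathsf{min}}^\sigma = p_{\mathsf{ortho}}^\sigma$ and hence $\mathsf{Size}(\sigma^2) = \mathsf{OrthoSize}(\sigma^2) \le \alpha$. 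Thus $\sigma^2 \in K_\alpha^2$.

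The substantive direction assumes $\sigma^2 \in \Sigma_{\mathsf{free}}^2 \cap K_\alpha^2$ and must deduce that $p_{\mathsf{ortho}}^\sigma$ itself satisfies AC2. The key structural fact I would establish first is that, for a triangle, the locus of points satisfying AC1 is a line $L$, and along $L$ each constraint $\pi(p, b_0^\sigma) \le \pi(p, b_i)$ is \emph{affine}, since the quadratic terms cancel in the power distance. Hence the set of witnesses---points of $L$ also satisfying AC2---is a closed interval $I$, and because $\sigma^2$ lies in the Delaunay complex this interval is nonempty. Meanwhile $\pi(\cdot, b_0^\sigma)$ restricted to $L$ is an upward parabola minimized exactly at $p_{\mathsf{ortho}}^\sigma$. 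Therefore $p_{\mathsf{min}}^\sigma$ is simply the point of $I$ closest to $p_{\mathsf{ortho}}^\sigma$, and only two cases arise.

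If $p_{\mathsf{ortho}}^\sigma \in I$, then $p_{\mathsf{min}}^\sigma = p_{\mathsf{ortho}}^\sigma$ is a witness and AC2 holds, as required. The crux is ruling out the remaining case, in which $p_{\mathsf{ortho}}^\sigma \notin I$ so that $I \ne L$ and $p_{\mathsf{min}}^\sigma$ is a finite endpoint $p_a$ of $I$. At such an endpoint some affine constraint is tight, i.e.\ $\pi(p_a, b_0^\sigma) = \pi(p_a, b_j)$ for a fourth ball $b_j$; I would verify that $p_a$ is then the ortho-center of the tetrahedron $\tau = (b_0^\sigma, b_1^\sigma, b_2^\sigma, b_j)$ and that it witnesses $\tau$ (it satisfies AC1 and AC2 for $\tau$, the latter since $p_a$ is a witness for $\sigma^2$). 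Consequently $\mathsf{OrthoSize}(\tau) = \pi(p_a, b_0^\sigma) = \mathsf{Size}(\sigma^2) \le \alpha$, so $\tau \in \sigmao^3$; and since $p_a = p_{\mathsf{ortho}}^\tau$ satisfies AC2, Step~2 places $\tau$ in $K_\alpha^3$. But $\sigma^2$ is a facet of $\tau$, so $\sigma^2 \in \mathsf{Facets}(K_\alpha^3)$, contradicting $\sigma^2 \in \Sigma_{\mathsf{free}}^2$. Hence this case cannot occur for a free triangle, and the forward implication follows.

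I expect the main obstacle to be the careful justification that the finite endpoints of the witness interval are exactly ortho-centers of Delaunay tetrahedra having $\sigma^2$ as a facet, and that such an endpoint witnesses its tetrahedron, since this duality is what converts ``$\sigma^2$ is free'' into a genuine contradiction. A minor point of bookkeeping is the case where $I$ is a ray (a triangle on the convex hull): there the parabola still tends to $+\infty$ along the unbounded direction, so $p_{\mathsf{min}}^\sigma$ is either $p_{\mathsf{ortho}}^\sigma$ or the single finite endpoint, and the identical contradiction applies.
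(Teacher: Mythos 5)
Your proof is correct and takes essentially the same route as the paper's: both reduce the forward implication to the radical axis, use the fact that the witness set there is an interval (the paper phrases this via the half-intervals $I^+(b_i)$ of the violating balls), and conclude that if $p_{\mathsf{ortho}}^\sigma$ is not a witness then the closest witness $p_{\mathsf{min}}^\sigma$ sits at an interval endpoint where a fourth ball is equidistant, making it the ortho-center and closest witness of a tetrahedron in $K_\alpha^3$, which contradicts $\sigma^2 \in \Sigma_{\mathsf{free}}^2$. Your choice to work with the full witness interval $I$ rather than only the violators' half-intervals is a minor tidying (it makes explicit why the endpoint is itself a witness), not a different argument.
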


\begin{proof} 
We first prove the backward implication, namely if $\sigma^2 \in \Sigma_{\mathsf{free}}^2$ satisfies AC2 with $p = p_{\mathsf{ortho}}^\sigma$, then $\sigma^2 \in K_\alpha^2$. Note that $p_{\mathsf{ortho}}^\sigma$ satisfies AC1 by definition. Further, it satisfies AC2 by assumption and hence $\mathsf{Size}(\sigma^2) = \mathsf{OrthoSize}(\sigma^2)$. We also have $\mathsf{OrthoSize}(\sigma^2)\leq \alpha$ because $\sigma^1\in \Sigma_{\mathsf{free}}^2 \subseteq \sigmao^2$. So, $\mathsf{Size}(\sigma^2)\leq \alpha$ thereby satisfying AC3. The triangle $\sigma^2$ with $p = p_{\mathsf{ortho}}^\sigma$ satisfies all three conditions and hence belongs to $K_\alpha^2$.

\begin{figure}[ht!]
\centering
\includegraphics[width=10cm]{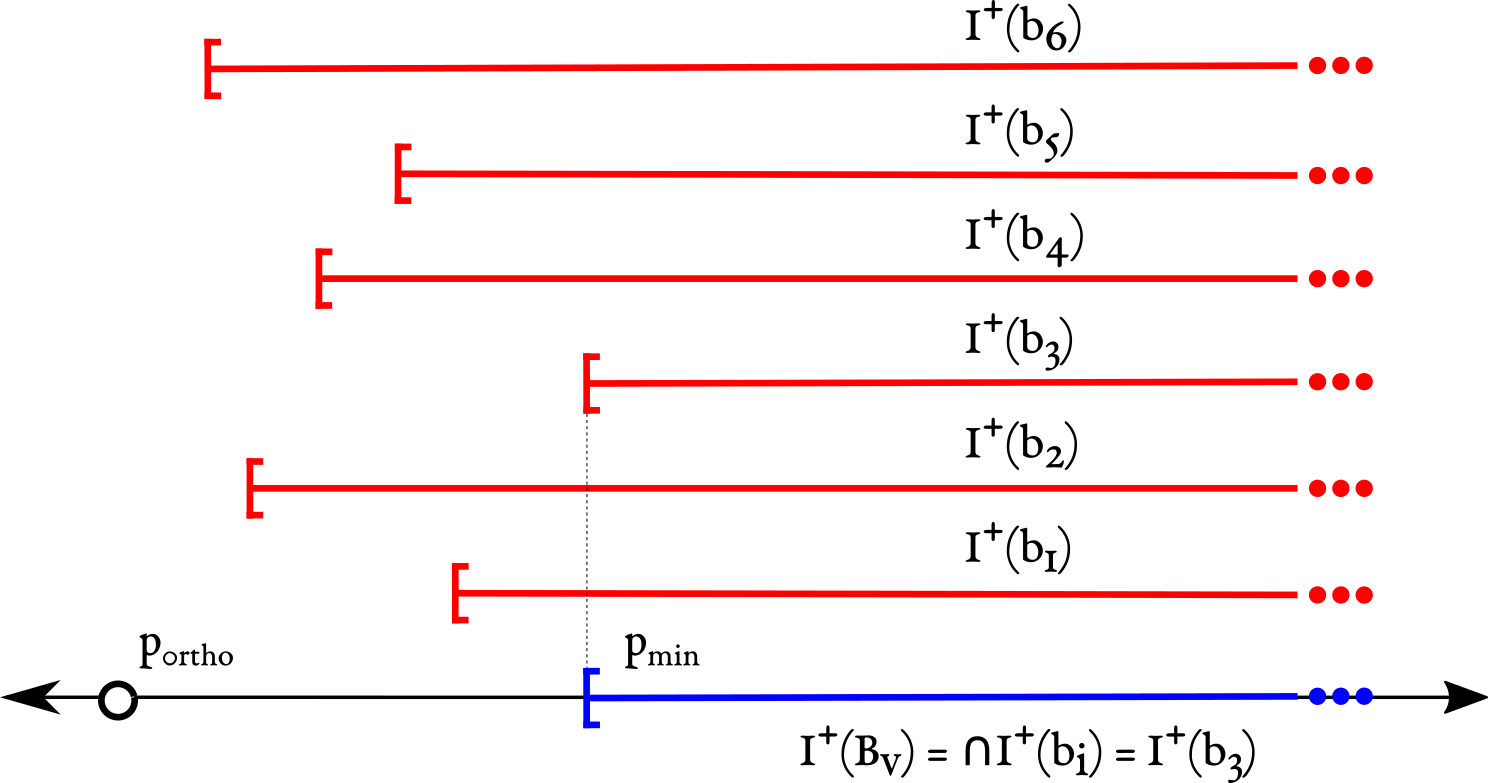}
\caption[The radical axis and half intervals induced by other balls]{
The radical axis of a triangle $\sigma^2$ is drawn such that $p_{\mathsf{ortho}}^\sigma$ is at the origin. A ball $b_i \in B - \sigma^2$ divides the radical axis into two half intervals. Points in the half interval $I^+(b_i)$ are closer to $b_0^\sigma$ as compared to $b_i$, \emph{i.e.} for all $p\in I^+(b_i)$, $\pi(p,b_0^\sigma)<\pi(p,b_i)$. Consider the set $B_v$ of balls that are closer to $p_{\mathsf{ortho}}^\sigma$ as compared to $b_0^\sigma$ So, $I^+(b_i)$ does not contain $p_{\mathsf{ortho}}^\sigma$. The intersection of these intervals, denoted by $I^+(B_v)$, is equal to one of the intervals $I^+(b_i)$. Here, $I^+(B_v)=I^+(b_3)$. The end point of the interval $I^+(b_3)$ is the closest witness for the tetrahedron $\sigma^2 \cup b_3$.
}
\label{fig:intervals}
\end{figure}

We will now prove the forward implication via contradiction. Suppose there exists a triangle $\sigma^2 \in \Sigma_{\mathsf{free}}^2$ that belongs to $K_\alpha^2$ but does not satisfy AC2 with $p = p_{\mathsf{ortho}}^\sigma$.  In other words, there exists a ball  $b_i \in B - \sigma^2$ for which $\pi(p_{\mathsf{ortho}}^\sigma, b_i) < \pi(p_{\mathsf{ortho}}^\sigma, b_0^\sigma)$. Let $B_v$ denote the set of all such balls $b_i$. The set of points that are equidistant from the three balls $(b_0^\sigma, b_1^\sigma, b_2^\sigma)$ corresponding to $\sigma^2$ form a line perpendicular to the plane containing $\sigma^2$ called the \emph{radical axis}. Each ball $b_i \in B_v$  partitions the radical axis into two half-intervals based on whether the point on radical axis is closer to $b_i$ or to $b_0^\sigma$, see Figure~\ref{fig:intervals}. Let $I^+(b_i)$ denote the half interval consisting of points that are closer to $b_0^\sigma$ compared to $b_i$. Let $I^+(B_v)$ denote the intersection of all such half intervals $I^+(b_i)$. We have assumed that $\sigma^2 \in K_\alpha^2$, so there must exist a closest witness $p_{\mathsf{min}}^\sigma$, and it lies within $I^+(B_v)$. Thus, $I^+(B_v)$ is non-empty. In fact, $I^+(B_v)=I^+(b_j)$ for some $ b_j\in B_v$ 
and $p_{\mathsf{min}}^\sigma$ is exactly the end point of $I^+(b_j)$. This implies that $p_{\mathsf{min}}^\sigma$ is also a closest witness for the tetrahedron $\sigma^3=(b_0^\sigma, b_1^\sigma, b_2^\sigma, b_j)$. So, $\sigma^3$ belongs to $K_\alpha^3$ and its $\mathsf{Size}$ is equal to $\mathsf{Size}(\sigma^2)$. However, this means that $\sigma^2 \notin \Sigma_{\mathsf{free}}^2$, a contradiction. So, the forward implication in the lemma is true.
\end{proof}

\begin{claim}
\bf Step~3 computes $K_\alpha^2$ correctly.
\end{claim}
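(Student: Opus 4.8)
The plan is to establish a two-sided set equality: the collection of triangles that Step~3 inserts must coincide exactly with the true set of $2$-dimensional simplices of the alpha complex. Since Step~3 uses two distinct insertion mechanisms---unconditionally inserting facets of tetrahedra in $K_\alpha^3$, and conditionally inserting triangles of $\Sigma_{\mathsf{free}}^2$ that pass the AC2 test at $p_{\mathsf{ortho}}^\sigma$---I would organize the argument by partitioning the alpha-complex triangles according to whether or not they are facets of a tetrahedron in $K_\alpha^3$, matching each part to one of these mechanisms.

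First I would argue soundness, i.e.\ that every triangle inserted in Step~3 genuinely belongs to the alpha complex. For a triangle inserted because it is a facet of some $\sigma^3 \in K_\alpha^3$, I would invoke the fact, already established in the Background section, that $K_\alpha$ is a subcomplex of the Delaunay complex; by the closure property of a simplicial complex, $\sigma^3 \in K_\alpha$ forces all of its facets into $K_\alpha^2$. Because Step~2 computes $K_\alpha^3$ correctly (by the preceding claim), these are facets of true alpha-tetrahedra, so the insertion is valid. For a triangle inserted from $\Sigma_{\mathsf{free}}^2$ after passing the AC2 check at $p_{\mathsf{ortho}}^\sigma$, soundness is precisely the backward implication of Lemma~\ref{lemma:danglingTri}.

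Next I would argue completeness, i.e.\ that every alpha-complex triangle $\sigma^2$ is in fact inserted. I would split on whether $\sigma^2$ is a facet of a tetrahedron in $K_\alpha^3$. If it is, then, again using the correctness of Step~2, $\sigma^2 \in \mathsf{Facets}(K_\alpha^3)$ and it is inserted in the first part of Step~3. If it is not, I would first note $\sigma^2 \in \sigmao^2$: since $\sigma^2$ belongs to the alpha complex we have $\mathsf{Size}(\sigma^2) \le \alpha$ by AC3, and as $\mathsf{OrthoSize} \le \mathsf{Size}$ always holds, $\mathsf{OrthoSize}(\sigma^2) \le \alpha$. Together with $\sigma^2 \notin \mathsf{Facets}(K_\alpha^3)$ this yields $\sigma^2 \in \Sigma_{\mathsf{free}}^2 = \sigmao^2 - \mathsf{Facets}(K_\alpha^3)$, and the forward implication of Lemma~\ref{lemma:danglingTri} then guarantees that $\sigma^2$ satisfies AC2 at $p_{\mathsf{ortho}}^\sigma$, so it is inserted in the second part of Step~3.

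The routine verifications---the monotonicity chain $\mathsf{OrthoSize} \le \mathsf{Size} \le \alpha$ and the simplicial-complex closure property---are immediate, so the proof reduces almost entirely to correctly matching the partition of the alpha-complex triangles with the two insertion paths. I expect the main obstacle to be conceptual bookkeeping rather than new geometry: namely, confirming that the two mechanisms are jointly exhaustive, and that the definition of $\Sigma_{\mathsf{free}}^2$ aligns exactly with ``alpha-complex triangles that are not facets of alpha-tetrahedra.'' Because Lemma~\ref{lemma:danglingTri} already carries the analytic weight for the free triangles and the correctness of Step~2 handles the attached ones, no further witness-level argument should be required.
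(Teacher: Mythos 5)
Your proposal is correct and follows essentially the same route as the paper: partition the alpha-complex triangles into facets of tetrahedra in $K_\alpha^3$ (handled by the closure property of simplicial complexes and the correctness of Step~2) and free triangles (handled by the two implications of Lemma~\ref{lemma:danglingTri}). Your write-up is somewhat more explicit than the paper's---in particular, you spell out the step $\mathsf{OrthoSize}(\sigma^2) \le \mathsf{Size}(\sigma^2) \le \alpha$ showing that an alpha-complex triangle not covered by a tetrahedron indeed lies in $\sigmao^2$ and hence in $\Sigma_{\mathsf{free}}^2$, which the paper leaves implicit---but this is a refinement of the same argument, not a different one.
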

\begin{proof}
If a simplex $\sigma^3$ belongs to $K_\alpha$ then naturally all of its faces also belong to $K_\alpha$. The algorithm includes such triangles into $K_\alpha^2$ and remove them from $\sigmao^2$ to obtain the set of free triangles $\Sigma_{\mathsf{free}}^2$ 
%that have $\mathsf{OrthoSize} \leq \alpha$ and are not incident on any tetrahedron in $K_\alpha$
. It follows directly from Lemma~\ref{lemma:danglingTri} that AC2 is a necessary and sufficient condition for a triangle in $\Sigma_{\mathsf{free}}^2$  to belong to $K_\alpha^2$. Hence, Step~3 correctly computes the triangles belonging to $K_\alpha^2$. 
\end{proof}

\begin{figure}[ht!]
\centering
\includegraphics[width=10cm]{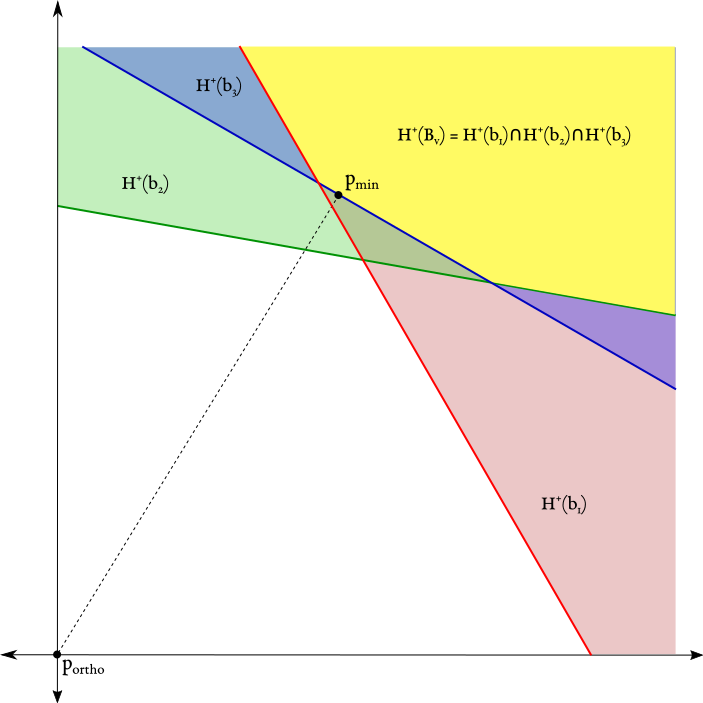}
\caption[The radical plane and half planes induced by other balls]{
The radical plane of an edge $\sigma^1$ is drawn such that $p_{\mathsf{ortho}}^\sigma$ is at the origin. A ball $b_i \in B - \sigma^1$ divides the radical plane into two half planes. The half plane $H^+(b_i)$ consists of points that are closer to $b_0^\sigma$ as compared to $b_i$, \emph{i.e.} for all $p\in H^+(b_i)$, $\pi(p,b_0^\sigma)<\pi(p,b_i)$. Let $B_v$ denote the set of balls that are closer to $p_{\mathsf{ortho}}^\sigma$ as compared to $b_0^\sigma$. The half planes $H^+(b_i)$ do not contain $p_{\mathsf{ortho}}^\sigma$. The intersection of these half planes, denoted by $H^+(B_v)$, is a convex region~(yellow). The power distance from $b_0^\sigma$ to a point $p\in H^+(B_v)$ is minimized at a point on the boundary of the convex region $H^+(B_v)$. But the boundary of $H^+(B_v)$ is a union of line segments that bound half planes $H^+(b_i)$. Here, the point at which the distance is minimum lies on the boundary of the half plane $H^+(b_3)$. This point is the closest witness for the triangle $\sigma^1 \cup b_3$.
}
\label{fig:halfPlanes}
\end{figure}
The above arguments need to be extended to prove that the edges of the alpha complex are also correctly identified. 

\begin{lemma} An edge $\sigma^1 \in \Sigma_{\mathsf{free}}^1$ belongs to $K_\alpha^1$ if and only if it satisfies the condition AC2 with $p = p_{\mathsf{ortho}}^\sigma$.
\label{lemma:danglingEdge}
\end{lemma}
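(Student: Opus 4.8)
The plan is to mirror the proof of Lemma~\ref{lemma:danglingTri}, raising the codimension by one: the set of points equidistant from the two balls of $\sigma^1$ is now the two-dimensional \emph{radical plane} rather than a one-dimensional radical axis, and the key optimization argument moves from an interval to a convex planar region (Figure~\ref{fig:halfPlanes}).

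The backward implication is immediate and identical in spirit to the triangle case. If $\sigma^1 \in \Sigma_{\mathsf{free}}^1$ satisfies AC2 at $p = p_{\mathsf{ortho}}^\sigma$, then $p_{\mathsf{ortho}}^\sigma$ satisfies AC1 by definition and AC2 by assumption, so it is a witness and $p_{\mathsf{ortho}}^\sigma = p_{\mathsf{min}}^\sigma$, giving $\mathsf{Size}(\sigma^1) = \mathsf{OrthoSize}(\sigma^1)$. Since $\sigma^1 \in \sigmao^1$ we have $\mathsf{OrthoSize}(\sigma^1) \leq \alpha$, so AC3 holds as well and $\sigma^1 \in K_\alpha^1$.

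For the forward implication I argue by contradiction, assuming $\sigma^1 \in K_\alpha^1$ but AC2 fails at $p_{\mathsf{ortho}}^\sigma$; let $B_v$ be the nonempty set of balls $b_i \in B - \sigma^1$ with $\pi(p_{\mathsf{ortho}}^\sigma, b_i) < \pi(p_{\mathsf{ortho}}^\sigma, b_0^\sigma)$. Each such ball cuts the radical plane into the half-plane $H^+(b_i)$ of points closer to $b_0^\sigma$, and $p_{\mathsf{ortho}}^\sigma$ lies in none of them, hence outside $H^+(B_v) := \bigcap_{b_i \in B_v} H^+(b_i)$. The heart of the argument is that the restriction of $f(p) = \pi(p, b_0^\sigma)$ to the radical plane is a convex quadratic whose unconstrained minimum over the plane is exactly $p_{\mathsf{ortho}}^\sigma$. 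Since $\sigma^1 \in K_\alpha^1$, its closest witness $p_{\mathsf{min}}^\sigma$ exists and, satisfying AC2, lies inside $H^+(B_v)$; because the global minimizer $p_{\mathsf{ortho}}^\sigma$ of the convex function $f$ lies outside this convex region, $p_{\mathsf{min}}^\sigma$ must sit on its boundary. A descent-direction argument (the direction from $p_{\mathsf{min}}^\sigma$ toward $p_{\mathsf{ortho}}^\sigma$ strictly decreases $f$, so it must exit $H^+(B_v)$) then shows the supporting constraint is a half-plane $H^+(b_j)$ with $b_j \in B_v$, making $p_{\mathsf{min}}^\sigma$ equidistant from $b_0^\sigma, b_1^\sigma$, and $b_j$.

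Finally I conclude as in the triangle case. The point $p_{\mathsf{min}}^\sigma$ satisfies AC1 for the triangle $\tau = (b_0^\sigma, b_1^\sigma, b_j)$ and, inheriting AC2 from $\sigma^1$ (since $B - \tau \subseteq B - \sigma^1$), it is a witness for $\tau$; thus $\tau$ is Delaunay and $\mathsf{Size}(\tau) \leq \pi(p_{\mathsf{min}}^\sigma, b_0^\sigma) = \mathsf{Size}(\sigma^1) \leq \alpha$, so $\tau \in K_\alpha^2$. But then $\sigma^1$ is a facet of $\tau$ and therefore $\sigma^1 \in \mathsf{Facets}(K_\alpha^2)$, contradicting $\sigma^1 \in \Sigma_{\mathsf{free}}^1$. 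The step I expect to be the main obstacle is exactly this planar optimization: in Lemma~\ref{lemma:danglingTri} the minimum over a one-dimensional interval is trivially attained at its endpoint, whereas here I must argue carefully that the minimum over the two-dimensional region $H^+(B_v)$ lies on the boundary and, crucially, that the supporting constraint comes from $B_v$ rather than from a ball that bounds $H^+(B_v)$ without excluding $p_{\mathsf{ortho}}^\sigma$.
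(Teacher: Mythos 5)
Your proposal is correct and follows essentially the same route as the paper's own proof: the backward implication via $p_{\mathsf{ortho}}^\sigma = p_{\mathsf{min}}^\sigma$, and the forward implication by contradiction using the radical plane, the half-planes $H^+(b_i)$ for the violating balls $B_v$, the location of $p_{\mathsf{min}}^\sigma$ on the boundary of $H^+(B_v)$, and the resulting triangle $(b_0^\sigma, b_1^\sigma, b_j) \in K_\alpha^2$ contradicting $\sigma^1 \in \Sigma_{\mathsf{free}}^1$. Your explicit convex-quadratic and descent-direction justification, and your use of witness status (rather than closest-witness status) to bound $\mathsf{Size}(\tau)$, merely make rigorous what the paper states more tersely.
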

\begin{proof} 
First, we assume that $\sigma^1 \in \Sigma_{\mathsf{free}}^1$ satisfies AC2 with $p = p_{\mathsf{ortho}}^\sigma$.
The point $p_{\mathsf{ortho}}^\sigma$ satisfies AC1 by definition. Further, it satisfies AC2 by assumption and hence $\mathsf{Size}(\sigma^1) = \mathsf{OrthoSize}(\sigma^1)$. We also have $\mathsf{OrthoSize}(\sigma^1)\leq \alpha$ because $\sigma^1 \in \Sigma_{\mathsf{free}}^1 \subseteq \sigmao^1$. So, $\mathsf{Size}(\sigma^1)\leq \alpha$ thereby satisfying AC3. The edge $\sigma^1$ with $p = p_{\mathsf{ortho}}^\sigma$ satisfies all three conditions and hence belongs to $K_\alpha^1$.

We will prove the forward implication via contradiction.  Suppose there exists an edge $\sigma^1 \in \Sigma_{\mathsf{free}}^1$ that belongs to $K_\alpha^1$ but does not satisfy AC2 with $p = p_{\mathsf{ortho}}^\sigma$.  In other words, there exists a ball  $b_i \in B - \sigma^1$ such that $\pi(p_{\mathsf{ortho}}^\sigma, b_i) < \pi(p_{\mathsf{ortho}}^\sigma, b_0^\sigma)$. Let $B_v$ denote the set of all such balls $b_i$. The set of points that are equidistant from the two balls $(b_0^\sigma, b_1^\sigma)$ corresponding to $\sigma^1$ form a plane perpendicular to the line containing $\sigma^1$ called the \emph{radical plane}. Each ball $b_i \in B_v$  partitions the radical plane into two half-planes based on whether the point on the radical plane is closer to $b_i$ or to $b_0^\sigma$, see Figure~\ref{fig:halfPlanes}. Let $H^+(b_i)$ denote the half-plane consisting of points that are closer to $b_0^\sigma$ compared to $b_i$. Let $H^+(B_v)$ denote the intersection of all such half-planes $H^+(b_i)$. We have assumed that $\sigma^1 \in K_\alpha^1$, so there must exist a closest witness $p_{\mathsf{min}}^\sigma$, and it lies within $H^+(B_v)$. Thus, $H^+(B_v)$ is non-empty. In fact, $p_{\mathsf{min}}^\sigma$ lies on the envelope of $H^+(B_v)$ because it minimizes the distance to $b_0^\sigma$. Let $p_{\mathsf{min}}^\sigma$ lie on the bounding line corresponding to $H^+(b_j)$ for some $ b_j\in B_v$. This implies that $p_{\mathsf{min}}^\sigma$ is also a closest witness for the triangle $\sigma^2=(b_0^\sigma, b_1^\sigma, b_j)$. So, $\sigma^2$ belongs to $K_\alpha^2$, and its size is equal to $\mathsf{Size}(\sigma^1)$. However, this means that $\sigma^1 \notin \Sigma_{\mathsf{free}}^1$, a contradiction. So, the forward implication in the lemma is also true.
\end{proof}

\begin{claim}
\bf Step~4 computes $K_\alpha^1$ correctly.
\end{claim}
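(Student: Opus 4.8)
The plan is to mirror the argument used for the preceding claim (that Step~3 computes $K_\alpha^2$ correctly), replacing the triangle lemma by its edge analogue, Lemma~\ref{lemma:danglingEdge}. The only structural inputs beyond that lemma are the downward-closure axiom of a simplicial complex together with the already-established correctness of $K_\alpha^2$. Thus the proof reduces to a clean case split on the edges of the true alpha complex rather than any new geometric reasoning, and no analogue of the radical-plane construction from Lemma~\ref{lemma:danglingEdge} needs to be redone here.

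First I would partition the edges of $K_\alpha^1$ into two groups according to whether or not they are a facet of some triangle in $K_\alpha^2$. For the first group---edges that are facets of a triangle $\sigma^2 \in K_\alpha^2$---downward closure of $K_\alpha$ gives immediate membership: if $\sigma^2 \in K_\alpha$ then every edge of $\sigma^2$ lies in $K_\alpha$. Step~4 inserts exactly these edges, namely $\mathsf{Facets}(K_\alpha^2)$, so this group is handled with no condition check. The remaining candidate edges are precisely those in $\Sigma_{\mathsf{free}}^1 = \sigmao^1 - \mathsf{Facets}(K_\alpha^2)$. To these I would apply Lemma~\ref{lemma:danglingEdge}, which states that a free edge belongs to $K_\alpha^1$ if and only if it satisfies AC2 at $p = p_{\mathsf{ortho}}^\sigma$---exactly the test performed in Step~4. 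Taking the union of the inserted facets and the surviving free edges then yields $K_\alpha^1$ exactly, establishing both soundness (no spurious edge is added) and completeness (no true edge is missed).

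The step I expect to require the most care is completeness, that is, arguing the two cases are exhaustive so that no edge of $K_\alpha^1$ escapes both branches. This hinges on two earlier facts: that $\sigmao^1$ already contains every edge of $K_\alpha^1$ (since AC3 gives $\mathsf{OrthoSize}(\sigma^1) \leq \mathsf{Size}(\sigma^1) \leq \alpha$, placing $\sigma^1$ in $\sigmao^1$), and that $K_\alpha^2$ has been computed correctly by Step~3, so that $\mathsf{Facets}(K_\alpha^2)$ faithfully records the non-free alpha-edges. Granting these, every edge of $K_\alpha^1$ lies either in $\mathsf{Facets}(K_\alpha^2)$ or in $\Sigma_{\mathsf{free}}^1$, and the case split is exhaustive. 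Everything else is routine bookkeeping, so the claim follows directly once Lemma~\ref{lemma:danglingEdge} is invoked.
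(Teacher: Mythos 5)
Your proof is correct and follows essentially the same route as the paper's: edges that are facets of triangles in $K_\alpha^2$ are admitted by downward closure, and the free edges in $\Sigma_{\mathsf{free}}^1$ are settled by Lemma~\ref{lemma:danglingEdge}, whose AC2 test is exactly what Step~4 performs. Your added remarks on exhaustiveness (that AC3 and $\mathsf{OrthoSize} \leq \mathsf{Size}$ place every true alpha-edge in $\sigmao^1$, and that Step~3's correctness is needed for $\mathsf{Facets}(K_\alpha^2)$) make explicit what the paper leaves implicit, but they do not change the argument.
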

\begin{proof}
All edge faces of triangles in $K_\alpha^2$ naturally belong to $K_\alpha^1$. Step~4 inserts all edges incident on triangles in $K_\alpha^2$ into $K_\alpha^1$ as valid edges and removes them from $\sigmao^1$ to obtain the set of free edges $\Sigma_{\mathsf{free}}^1$. It follows directly from Lemma~\ref{lemma:danglingEdge} that AC2 is  a necessary and sufficient condition for an edge $\sigma^1 \in \Sigma_{\mathsf{free}}^1$ to belong to $K_\alpha^1$. Therefore, Step~4 correctly computes the edges belonging to $K_\alpha^1$.
\end{proof}
\begin{claim}
\bf Step~5 computes $K_\alpha^0$ correctly.
\end{claim}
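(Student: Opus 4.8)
The plan is to follow exactly the template established for Steps~3 and~4. First I would observe that every endpoint of an edge in $K_\alpha^1$ is a face of a simplex already known to lie in the alpha complex, and hence must itself belong to $K_\alpha^0$; the algorithm inserts precisely these vertices and then removes them from $\sigmao^0$ to form the free vertices $\Sigma_{\mathsf{free}}^0$. It therefore remains only to characterize which free vertices belong to $K_\alpha^0$, and I would isolate this in a lemma analogous to Lemma~\ref{lemma:danglingTri} and Lemma~\ref{lemma:danglingEdge}: a ball $\sigma^0 = b_i \in \Sigma_{\mathsf{free}}^0$ lies in $K_\alpha^0$ if and only if it satisfies AC2 with $p = p_i$.

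For a single ball, condition AC1 is vacuous, so the ortho-center is simply the minimizer of $\pi(p, b_i) = \Vert p - p_i\Vert^2 - r_i^2$, namely $p_{\mathsf{ortho}}^\sigma = p_i$ with $\mathsf{OrthoSize}(\sigma^0) = -r_i^2$. The backward implication is then immediate: if $p_i$ satisfies AC2 it is a closest witness, so $\mathsf{Size}(\sigma^0) = \mathsf{OrthoSize}(\sigma^0) \leq \alpha$ (using $\sigma^0 \in \Sigma_{\mathsf{free}}^0 \subseteq \sigmao^0$), whence AC1--AC3 all hold and $b_i \in K_\alpha^0$.

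The forward implication I would prove by contradiction, mirroring the edge case but one dimension higher. Assume $\sigma^0 = b_i \in K_\alpha^0$ fails AC2 at $p_i$, and let $B_v = \{b_k \in B - \sigma^0 : \pi(p_i, b_k) < \pi(p_i, b_i)\}$. Here the set of points ``equidistant from the balls of $\sigma^0$'' is all of $\mathbb{R}^3$, and each $b_k \in B_v$ cuts it, along the plane of points equidistant in power distance from $b_i$ and $b_k$, into a half-space $H^+(b_k)$ of points closer to $b_i$, with $p_i \notin H^+(b_k)$. Since $\sigma^0 \in K_\alpha^0$, a closest witness $p_{\mathsf{min}}^\sigma$ exists inside the convex polyhedron $H^+(B_v) = \bigcap_k H^+(b_k)$, which is thus nonempty but does not contain $p_i$. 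Because $p_{\mathsf{min}}^\sigma$ minimizes $\pi(p, b_i)$, equivalently the Euclidean distance to $p_i$, over this closed convex region not containing $p_i$, the minimizer must lie on the boundary, i.e.\ on a bounding plane $H^+(b_j)$ for some $b_j \in B_v$. That point is equidistant from $b_i$ and $b_j$ and closest among all balls, so it is a closest witness for the edge $\sigma^1 = (b_i, b_j)$; hence $\sigma^1 \in K_\alpha^1$ with $\mathsf{Size}(\sigma^1) = \mathsf{Size}(\sigma^0)$. But then $b_i$ is an endpoint of an edge in $K_\alpha^1$, contradicting $\sigma^0 \in \Sigma_{\mathsf{free}}^0$.

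The step I expect to carry the most weight is the claim that the minimizer lies on the boundary of $H^+(B_v)$, and therefore on a single bounding plane --- the exact three-dimensional analogue of the half-interval and half-plane arguments in the two preceding lemmas. Once this is established, the claim follows just as in Steps~3 and~4: the attached vertices are inserted directly, while the lemma shows that AC2 evaluated at $p = p_i$ is necessary and sufficient for the free vertices of $\Sigma_{\mathsf{free}}^0$, so Step~5 computes $K_\alpha^0$ correctly.
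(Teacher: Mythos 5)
Your proof is correct and follows essentially the same route as the paper: the paper's own proof inserts the endpoints of edges in $K_\alpha^1$, notes that AC1 is vacuous and AC3 follows from membership in $\sigmao^0$, and then asserts---deferring to the pattern of Lemmas~\ref{lemma:danglingTri} and~\ref{lemma:danglingEdge}---that a vertex of the alpha complex failing AC2 at $p = p_i$ must be incident on some edge of $K_\alpha^1$ and hence cannot lie in $\Sigma_{\mathsf{free}}^0$, which is exactly the contradiction you derive. Your write-up simply supplies the one-dimension-higher analogue (radical planes cutting $\mathbb{R}^3$ into half-spaces, with the closest witness forced onto a bounding plane) that the paper declares ``easy to prove,'' so it is, if anything, more detailed than the paper's own sketch.
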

\begin{proof}
All vertices incident on $K_\alpha^1$ naturally belong to $K_\alpha^0$. Step~5 inserts all such vertices in $K_\alpha^0$ as valid vertices and removes them from $\sigmao^0$ to obtain the set of free vertices $\Sigma_{\mathsf{free}}^0$. Next, the vertices in $\Sigma_{\mathsf{free}}^0$ for which the center of the ball $b_i=(p_i,r_i)$ satisfies AC2 are also inserted into $K_\alpha^0$. Clearly, these vertices also satisfy AC3 because they belong to $\sigmao^0$. The condition AC1 is not relevant for $0$-dimensional simplices. Therefore, these vertices clearly belong to the alpha complex. Similar to Lemmas~\ref{lemma:danglingTri} and \ref{lemma:danglingEdge}, it is easy to prove that checking for AC2 for $p=p_i$ is necessary and sufficient condition to decide whether a vertex in $\Sigma_{\mathsf{free}}^0$ belongs to the alpha complex. That is, it is possible to show that vertices in alpha complex that have non-empty Voronoi regions but do not satisfy AC2 for $p=p_i$ would be incident on some edge in $K_\alpha^1$, and therefore must have been already detected by Step~4 and hence can not belong to $\Sigma_{\mathsf{free}}^0$. Therefore, Step~5 correctly computes the vertices belonging to $K_\alpha^0$.
\end{proof}

The arguments in the proofs of Lemma~\ref{lemma:danglingTri} and Lemma~\ref{lemma:danglingEdge} are similar. A general result is likely true for $d$-dimensional simplices in $\Sigma_{\mathsf{free}}^d$. However, given the focus on alpha complexes in $\mathbb{R}^3$, we prefer to state and prove these results specific to lower dimensions. We also prefer to provide individual proofs for edges and triangles because it simplifies the exposition and could also potentially help in the design of improved data structures to accelerate computation of different steps of the algorithm. 

\section{Parallel algorithm for biomolecules}
Although the algorithm as described above is provably correct, a straightforward implementation will be extremely inefficient with a worst-case running time of $O(n^5)$, where $n$ is the number of weighted points in $B$. This is because Step~1 requires $O(n^4)$ time to generate all possible tetrahedra. In later steps, we need $O(n)$ effort per simplex to check AC2. However, the input corresponds to atoms in a biomolecule. We show how certain properties of biomolecules can be leveraged to develop a fast parallel implementation. 

\subsection{Biomolecular data characteristics}
Atoms in a biomolecule are well distributed. The following three properties of biomolecules are most relevant: 
\begin{itemize}
\item The radius of an atom is bounded. The typical radius of an atom in a protein molecule ranges between 1\r{A} to 2\r{A}~\cite{bondi1964van}. Further, a protein molecule contains upwards of thousand atoms. So, the radius is small compared to the total size of the molecule. 
\item There is a lower bound on the distance between the centres of two atoms. This is called the \emph{van der Waals contact distance}, beyond which the two atoms start repelling each other. In the case of atoms in protein molecules, this distance is at least 1\r{A}. This property together with the upper bound on atomic radii ensures that no atom is completely contained inside another. This means that the weighted Voronoi regions corresponding to the atoms in a biomolecule can be always be assumed to be non-empty.
\item Structural biologists are interested in small values of $\alpha$. The two crucial values are 0\r{A} and 1.4\r{A}. The former corresponds to using van der Waals radius and the latter corresponds to the radius of water molecule, which acts as the solvent.
\end{itemize} 
In the light of the above three properties, we can say that the number of simplices of the alpha complex that are incident on a weighted point~(atom) is independent of the total number of input atoms and hence bounded by a constant~\cite{halperin1998spheres}. 

\subsection{Acceleration data structure}
The algorithm will benefit from an efficient method for accessing points of $B$ that belong to a local neighborhood of a given weighted point. We store the weighted points in a grid-based data structure. Let $r_{max}$ denote the radius of the largest atom and assume that the value of the parameter $\alpha$ is available as input. First, we construct a grid with cells of side length $\sqrt{r_{max}^2+\alpha}$ and then bin the input atoms into the grid cells. In our implementation, we do not store the grid explicitly because it may contain several empty cells. Instead, we compute the cell index for each input atom and sort the list of atoms by cell index to ensure that atoms that belong to a particular cell are stored at consecutive locations. The cell index is determined based on a row-major or column-major order. Alternatively, a space-filling curves like the Hilbert curve could also be used to order the cells. 

After the atoms are stored in grid cells, the alpha complex is computed in two stages. In the first stage, we employ a bottom-up approach to obtain a conservative estimate of the edges, triangles, and tetrahedra belonging to the alpha complex. The false positives from the first stage are removed in a subsequent pruning stage resulting in the correct alpha complex. We  describe these two stages in the following subsections.

\subsection{Potential simplices}
The first stage essentially corresponds to Step~1 of the algorithm described in the previous section. We compute the set $\sigmao$ of potential simplices for which $\mathsf{OrthoSize}(\sigma^d)\leq \alpha$. However, for efficiency reasons we process the simplices in the order of increasing dimension. First, we identify edges that satisfy the AC3 condition as described below. Given the size of the grid cell, endpoints of edges that satisfy the condition either lie within the same grid cell or in adjacent cells. So, the grid data structure substantially reduces the time required to compute the list of potential edges $\sigmao^1$. Beginning from this set of edges, we construct the set of all possible triangles and retain the triangles whose $\mathsf{OrthoSize}$ is no greater than $\alpha$, resulting in the set $\sigmao^2$.  Finally, we use the triangles in $\sigmao^2$ to construct the list of tetrahedra that satisfy the $\mathsf{OrthoSize} \leq \alpha$ condition. The above procedure works because the $\mathsf{OrthoSize}$ of a simplex is always greater than or equal to the $\mathsf{OrthoSize}$ of its faces. The set of simplices identified in this stage contains all simplices of the alpha complex. False positives are pruned in the second stage described below.

\subsection{Pruning}
The second stage corresponds to Steps~2-5 of the algorithm and processes the potential simplices in the decreasing order of dimension. This stage checks the characterizing condition AC2 to prune $\sigmao$ into $K_{\alpha}$. The tetrahedra are processed by checking if any of the input balls are closer to the $ortho\text{-}center$ than the balls incident on the tetrahedron. If yes, the tetrahedron is pruned away. Else, the tetrahedron is recognized as belonging to the alpha complex and inserted into $K_\alpha^3$. Triangles incident on these tetrahedra also belong to the alpha complex and are inserted into $K_\alpha^2$ after they are removed from the list of potential triangles $\sigmao^2$. Next, the triangles in $\sigmao^2$ are processed by checking if they satisfy AC2. If yes, they are inserted into $K_\alpha^2$. Otherwise, they are pruned away. All edges incident on triangles belonging to $K_\alpha^2$ are inserted into $K_\alpha^1$ and removed from the set $\sigmao^1$. Next, the edges in $\sigmao^1$ are processed by checking if they satisfy AC2. Edges that satisfy AC2 are inserted into $K_\alpha^1$ and the others are pruned away. All the vertices in $\sigmao^0$ are directly inserted into $K_\alpha^0$ without the AC2 check because for biomolecular data we assume that Voronoi regions of all the atoms are non-empty. The check for condition AC2 for each simplex is again made efficient by the use of the grid data structure. Atoms that may violate AC2 lie within the same cell as that containing the $ortho\text{-}center$ or within the adjacent cells. Atoms that lie within other cells may be safely ignored.

\subsection{CUDA implementation}
We use the CUDA framework~\cite{cudaWeb} and the \texttt{thrust} library~\cite{thrustWeb} within CUDA to develop a parallel implementation of the algorithm that executes on the many cores of the GPU. 
The grid computation is implemented as a CUDA kernel where all atoms are processed in parallel. The computation of potential simplices and pruning stages are broken down into multiple CUDA kernels and parallelized differently in order to increase efficiency. We now describe the parallelization strategy in brief. 

For computing the set of potential edges, the initial enumeration of possible edges incident on an atom is done using the atoms in the corresponding grid cell and its neighbouring cells. This is done per atom in parallel, the thread corresponding to the atom $i$ being responsible for generating the edges $ij, j > i$ to ensure no duplicate edges are generated. Subsequently, the AC3 condition is checked for the edges in parallel to finally generate the list of potential edges $\sigmao^1$. For computing potential triangles $\sigmao^2$, the potential edge list is used as a starting point for the initial enumeration of all possible triangles. This step is also parallelized per atom, the thread $i$ being responsible for generation of triangles of the type $ijk; j,k > i$ if all three edges $ij$, $ik$ and $jk$ are potential edges. The AC3 condition for the triangle is checked next within a separate kernel and parallelized per triangle to generate the potential triangles $\sigmao^2$. A similar strategy is used for computing the set of potential tetrahedra $\sigmao^3$. 

The pruning stage is parallelized per tetrahedron, triangle, and edge as required. So, computation of $p_{\mathsf{ortho}}^\sigma$ is done in parallel. The grid data structure is again useful in checking for potential violators of the AC2 condition. Only the balls belonging to the grid cell corresponding to $p_{\mathsf{ortho}}^\sigma$ or the those in neighbouring grid cells can violate the AC2 condition. 

\subsection{Handling large data sizes}
Typical protein structures consist of up to 100,000 atoms. Our implementation can handle datasets of this size easily for reasonable values of $\alpha$. However, the size of datasets is ever increasing. Protein complexes that are available nowadays may consist of millions of atoms, necessitating smart management of GPU memory while handling such data sets.

We propose two strategies and implement one of them. The first strategy is to partition the grid by constructing an octree data structure and choosing an appropriate level in the octree to create partitions. Each partition together with its border cells can be processed independently of other partitions. So, we can copy one partition and its border to the GPU memory, compute its alpha complex, and copy the results back from GPU to CPU memory. After all the partitions are processed, the list of simplices can be concatenated followed by duplicate removal to generate the final alpha complex. 

The second strategy is to partition the sorted list of atoms into \emph{chunks} of equal sizes and to process each chunk independently. Here, we assume that the complete list of atoms together with the grid data structure fits in the GPU memory. This is a reasonable assumption considering that datasets containing several million atoms can easily fit on modern GPUs, which typically have at least 2GB video memory. Also, the main difficulty in handling large protein structures is managing the large lists of simplices generated within the intermediate steps of the algorithm, when compared to handling the input list of atoms or the output list of simplices. We compute the alpha complex by executing the algorithm in multiple passes. Each pass computes the alpha complex for a single chunk and copies it back to the CPU memory. We have implemented this second strategy and can handle data sizes of up to 16 million atoms on a GPU with 2GB of memory. Results are reported in the next section.

\section{Experimental results}
We now present results of computational experiments, which demonstrate that the parallel algorithm is fast in practice and significantly better than the state-of-the-art. We also performed runtime profiling to better understand the bottlenecks and effect of the parameter $\alpha$ on the runtime. 
We present results for $\alpha$ in the range $0.0$ to $2.0$. This range is important for structural analysis of biomolecules as it corresponds to \emph{solvent accessible surface} of the biomolecule for typical solvent molecules like water~(van der Waals radius = 1.4\r{A}). The value $\alpha = 0$ corresponds to the \emph{van der Waals} surface of the biomolecule.
All experiments, unless stated otherwise, were performed on a Linux system with an nVidia GTX~660~Ti graphics card running CUDA~8.0 and a 2.0GHz Intel Xeon octa core processor with 16 GB of main memory. The default number of threads per block was set at $512$ for all the CUDA kernels.

Mach and Koehl describe two techniques for computing alpha complex of biomolecules called \emph{AlphaVol} and \emph{UnionBall} in their paper~\cite{koehl2011volumes}. Both approaches construct the weighted Delaunay triangulation of input atoms first followed by a filtering step to obtain the alpha complex.\emph{UnionBall} is the state-of-the-art technique for alpha complex computation for biomolecules on multi-core CPU. It uses heuristics and optimizations specific to biomolecular data to improve upon \emph{AlphaVol}. For biomolecules containing 5~million atoms, \emph{AlphaVol} takes approximately 8600 seconds for computing the alpha complex, while \emph{UnionBall} takes approximately 150 seconds. Our method computes the alpha complex in less than 3 seconds for similar sized data, see Table~\ref{tab:compare660}. 

\subsection{Comparison with \emph{gReg3D}} 
\begin{table}[ht!]
\centering
\caption{Runtime comparison of the proposed algorithm with \emph{gReg3D} on an nVidia GTX 660 Ti graphics card. Timings are reported in milliseconds. $\%$Simplex refers to the size of the alpha complex as a percentage of the size of the weighted Delaunay triangulation. The last column shows the speedup in runtime of our algorithm over \emph{gReg3D}. `*' indicates the data was partitioned and processed in chunks. `--' indicates that the code could not execute due to insufficient memory.}
\resizebox{\columnwidth}{!}{
\begin{tabular}{ @{}ccrrrrrrr@{} }
\toprule
\multirow{2}{*}{$\alpha$} & \multirow{2}{*}{PDB id} & \multirow{2}{*}{$\#$Atoms} & \multicolumn{2}{c}{$K_{\alpha}$} & \multicolumn{2}{c}{\emph{gReg3D}} & \multirow{2}{*}{$\%$Simplex} & \multirow{2}{*}{Speed up}\\
 \cmidrule{4-5} \cmidrule{6-7}
 & & & $\#$Simplices & Time(ms) & $\#$Simplices & Time(ms) & &  \\
\toprule
\multirow{12}{*}{0.0} 
& 1GRM & 260 & 932 & 13 & 6295 & 117 & 14.8 & \textbf{9.0}\\
& 1U71 & 1505 & 5696 & 13 & 40878 & 115 & 13.9 & \textbf{11.1}\\
& 3N0H & 1509 & 5739 & 14 & 41244 & 137 & 13.9 & \textbf{10.0}\\
& 4HHB & 4384 & 38796 & 29 & 150141 & 193 & 25.8 & \textbf{6.6}\\
& 2J1N & 8142 & 29642 & 18 & 227719 & 229 & 13.0 & \textbf{12.7}\\
& 1K4C & 16068 & 62851 & 27 & 446383 & 347 & 14.1 & \textbf{12.9}\\
& 2OAU & 16647 & 123175 & 56 & 466586 & 344 & 26.4 & \textbf{6.2}\\
& 1AON & 58674 & 262244 & 65 & 1650841 & 879 & 15.9 & \textbf{13.5}\\
& 1X9P* & 217920 & 924086 & 113 & 6142811 & 2555 & 15.0 & \textbf{22.6}\\
& 1IHM* & 677040 & 2713083 & 277 & -- & -- & -- & \textbf{--}\\
& 4CWU* & 5905140 & 23450403 & 2709 & -- & -- & -- & \textbf{--}\\
& 3IYN* & 5975700 & 24188892 & 2874 & -- & -- & -- & \textbf{--}\\
\midrule
\multirow{12}{*}{1.0} 
& 1GRM & 260 & 1598 & 15 & 6295 & 117 & 25.4 & \textbf{7.9}\\
& 1U71 & 1505 & 10828 & 17 & 40878 & 115 & 26.5 & \textbf{8.5}\\
& 3N0H & 1509 & 10965 & 30 & 41244  & 137 & 26.6 & \textbf{4.6}\\
& 4HHB & 4384 & 65987 & 86 & 150141 & 193 & 44.0 & \textbf{2.2}\\
& 2J1N & 8142 & 58205 & 30 & 227719 & 229 & 25.6 & \textbf{7.6}\\
& 1K4C & 16068 & 118467 & 52 & 446383 & 347 & 26.5 & \textbf{6.7}\\
& 2OAU & 16647 & 199101 & 159 & 466586 & 344 & 42.7 & \textbf{2.2}\\
& 1AON & 58674 & 495683 & 160 & 1650841  & 879 & 30.0 & \textbf{5.5}\\
& 1X9P* & 217920 & 1653778 & 196 & 6142811  & 2555 & 26.9 & \textbf{13.0}\\
& 1IHM* & 677040 & 5058507 & 605 & -- & -- & -- & \textbf{--}\\
& 4CWU* & 5905140 & 44411353 & 5118 & -- & -- & -- & \textbf{--}\\
& 3IYN* & 5975700 & 45790463 & 5501 & -- & -- & -- & \textbf{--}\\
\bottomrule
\end{tabular}
}
\label{tab:compare660}
\end{table}
We are not aware of any available software that can compute the alpha complex directly without first constructing the complete Delaunay triangulation. In order to compare the performance, we chose the state-of-the-art parallel algorithm for computing the weighted Delaunay triangulation in 3D, \emph{gReg3D}~\cite{cao2014gpu}. The CUDA implementation of \emph{gReg3D} is available in the public domain. Table~\ref{tab:compare660} compares the running times of our proposed algorithm with that of \emph{gReg3D} for twelve different biomolecules at $\alpha=0$ and $\alpha=1$. As evident from the table, we consistently observe significant speedup over \emph{gReg3D}. The observed speedup is as high as a factor of 22 for the biomolecule 1X9P at $\alpha=0$, one of the largest molecules in our dataset. Clearly, the speedup goes down for $\alpha=1$ when compared to $\alpha=0$ because of the increased number of simplices in the output alpha complex. We also report the number of simplices in the alpha complex compared to the total number of simplices in the Delaunay triangulation under the column `$\%$Simplex'. This makes it clear why the speedup decreases as $\alpha$ is increased from 0 to 1. For example, for the protein 1AON, the fraction of alpha complex simplices increases from $15.9\%$ to $30\%$ as $\alpha$ is increased from 0 to 1. Correspondingly, the speedup decreases from a factor of 13.5 to 5.5. We  repeated the experiment on a MS~Windows system with an nVidia GTX~980~Ti card running CUDA~8.0 and observed similar speedups. However, the individual runtimes both for our algorithm and for \emph{gReg3D} were higher on the GTX~980~Ti. 
%We believe that the reason for this increased runtime is that the MS~Windows system was utilizing the GPU resources for its various GUI tasks whereas the Linux system did not require as many GPU cycles. 

The starred entries in Table~\ref{tab:compare660} are results for execution using the data partitioning approach. This is necessitated because these four large molecules generate large intermediate simplex lists that can not fit into the GPU memory if all the atoms in the molecule are processed at once. We observe that \emph{gReg3D} is able to successfully compute the Delaunay complex for only one out of these four large molecules and runs out of GPU memory for the remaining three molecules. 

\subsection{Runtime profiling} 
\begin{table}[b!]
\centering
\caption{Time spent within different steps of the algorithm. Timings are reported in milliseconds for memory transfer, grid computation, computing potential simplices, and pruning. The last column shows the total time taken for all steps.}
\resizebox{\columnwidth}{!}{
\begin{tabular}{ @{}crrrrrrrrrrrr@{} }
\toprule
\multirow{2}{*}{$\alpha$} & \multirow{2}{*}{PDB id} & \multirow{2}{*}{$\#$Atoms} & \multirow{2}{*}{$\#$Simplices} &\multirow{2}{*}{Memory} & \multirow{2}{*}{Grid} & \multicolumn{3}{c}{Potential Simplices} & \multicolumn{3}{c}{Pruning} & \multicolumn{1}{c}{Total}\\
 \cmidrule{7-9} \cmidrule{10-12}
 & & & & & & Edges & Tris & Tets & Tets & Tris & Edges & \multicolumn{1}{c}{time}\\
\toprule
\multirow{8}{*}{0.0} 
& 1GRM & 260 & 932 & 0.8 & 1.0 & 2.8 & 1.1 & 1.0 & 1.2 & 3.1 & 1.9 & 13.0\\
& 1U71 & 1505 & 5696 & 0.9 & 0.8 & 2.3 & 1.7 & 1.0 & 1.8 & 2.6 & 1.9 & 13.0\\
& 3N0H & 1509 & 5739 & 0.7 & 0.9 & 2.3 & 1.4 & 2.5 & 1.8 & 2.5 & 1.5 & 13.7\\
& 4HHB & 4384 & 38796 & 1.1 & 0.8 & 2.7 & 2.0 & 6.0 & 8.3 & 4.6 & 3.7 & 29.2\\
& 2J1N & 8142 & 29642 & 1.1 & 1.2 & 3.7 & 1.6 & 1.3 & 2.0 & 3.9 & 3.2 & 18.1\\
& 1K4C & 16068 & 62851 & 1.7 & 2.0 & 4.3 & 1.5 & 1.6 & 3.8 & 7.6 & 4.3 & 26.9\\
& 2OAU & 16647 & 123175 & 2.1 & 1.3 & 4.7 & 4.3 & 5.8 & 21.9 & 9.5 & 6.0 & 55.5\\
& 1AON & 58674 & 262244 & 4.6 & 2.8 & 11.1 & 5.6 & 4.1 & 16.7 & 10.9 & 9.4 & 65.2\\
\midrule
\multirow{8}{*}{1.0} 
& 1GRM & 260 & 1598 & 1.2 & 1.4 & 2.7 & 1.2 & 2.0 & 1.8 & 2.7 & 1.8 & 14.7\\
& 1U71 & 1505 & 10828 & 0.9 & 0.8 & 2.3 & 1.5 & 3.4 & 2.5 & 3.2 & 2.5 & 17.1\\
& 3N0H & 1509 & 10965 & 0.9 & 1.4 & 2.4 & 1.7 & 14.6 & 2.6 & 3.5 & 2.9 & 30.0\\
& 4HHB & 4384 & 65987 & 1.5 & 1.9 & 3.4 & 5.4 & 23.7 & 32.8 & 12.4 & 4.8 & 86.0\\
& 2J1N & 8142 & 58205 & 1.4 & 1.1 & 3.4 & 2.5 & 3.6 & 9.0 & 4.6 & 4.5 & 30.3\\
& 1K4C & 16068 & 118467 & 2.1 & 1.8 & 6.1 & 3.0 & 3.4 & 19.5 & 8.3 & 7.9 & 52.2\\
& 2OAU & 16647 & 199101 & 3.0 & 1.7 & 6.0 & 10.2 & 28.3 & 90.0 & 12.1 & 7.5 & 158.9\\
& 1AON & 58674 & 495683 & 6.3 & 2.0 & 12.4 & 9.9 & 12.5 & 87.9 & 17.9 & 11.0 & 159.9\\
\bottomrule
\end{tabular}
}
\label{tab:timeSplit}
\end{table}
\begin{figure}[ht!]
\centering
\subcaptionbox{$\alpha=0.0$\label{fig:timeAll1}}	
    {\includegraphics[width=11cm]{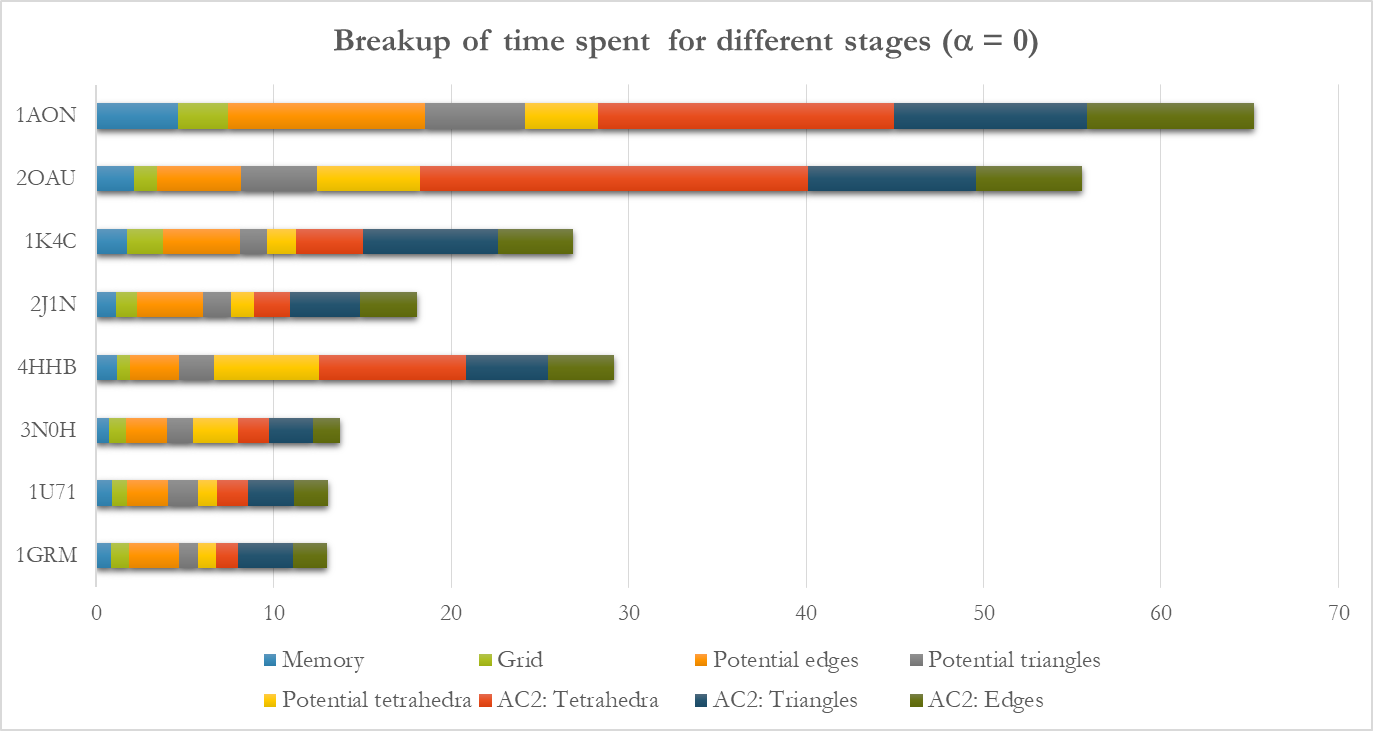}}	
    \qquad	
\subcaptionbox{$\alpha=1.0$\label{fig:timeAll2}}	
    {\includegraphics[width=11cm]{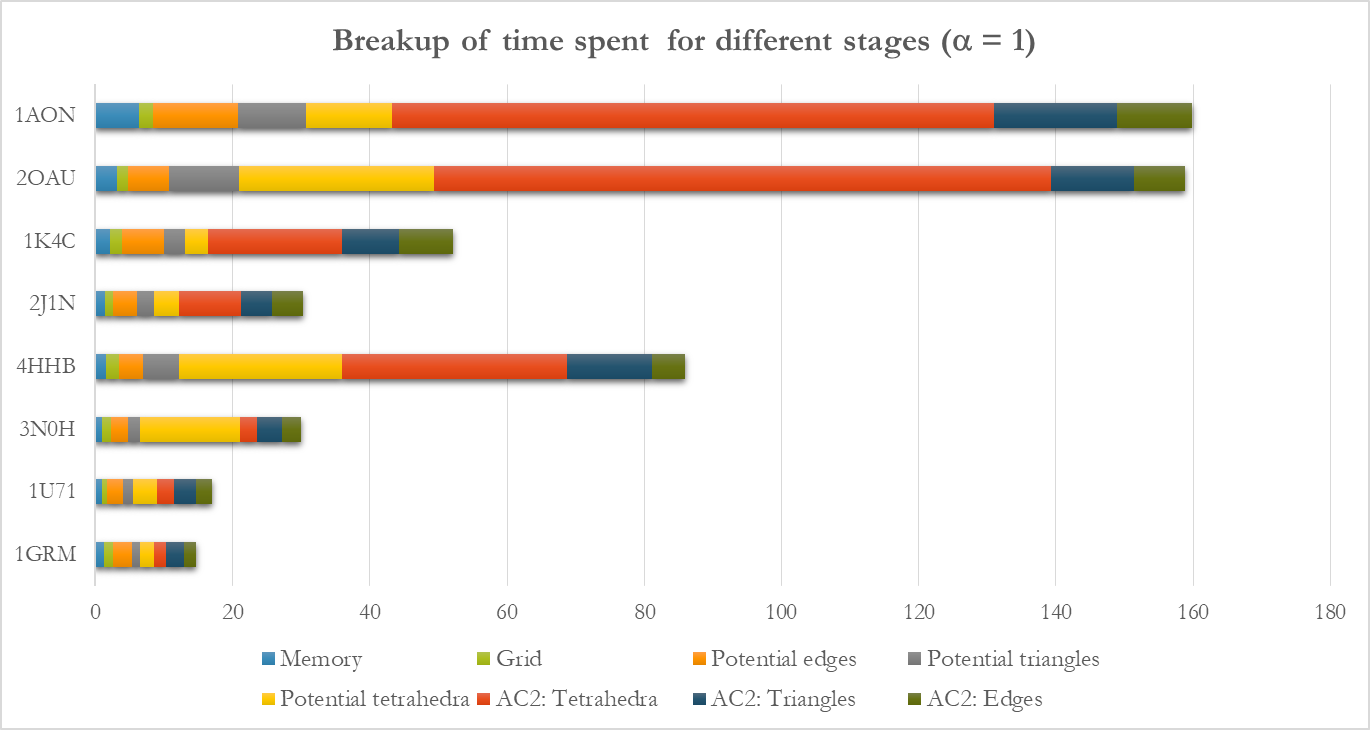}}
\caption[Time spent at different steps]{Time spent for different steps of the algorithm. 
}
\label{fig:timeAll}
\end{figure}
\begin{figure}[ht!]
\centering
\subcaptionbox{$\alpha=0.0$ \label{fig:timeSplitAll1}}	
    {\includegraphics[width=11cm]{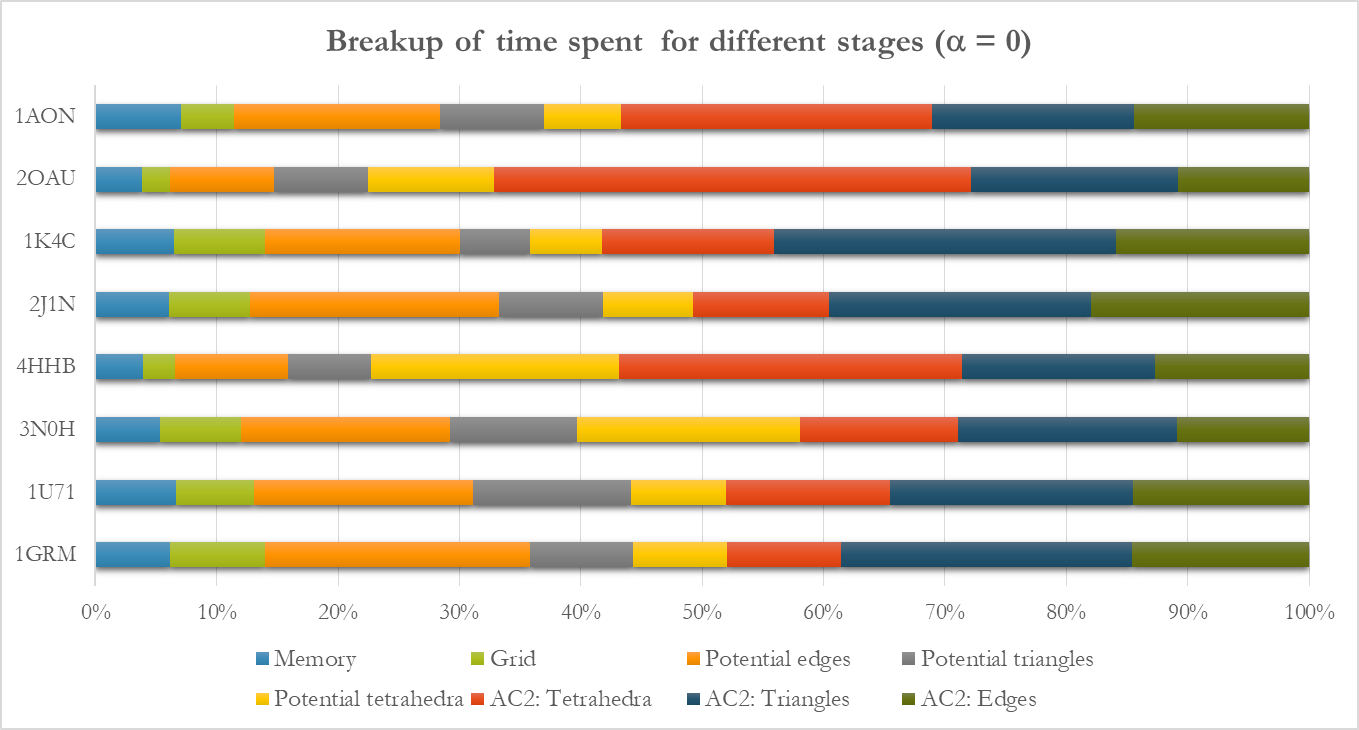}}	
    \qquad	
\subcaptionbox{$\alpha=1.0$ \label{fig:timeSplitAll2}}	
    {\includegraphics[width=11cm]{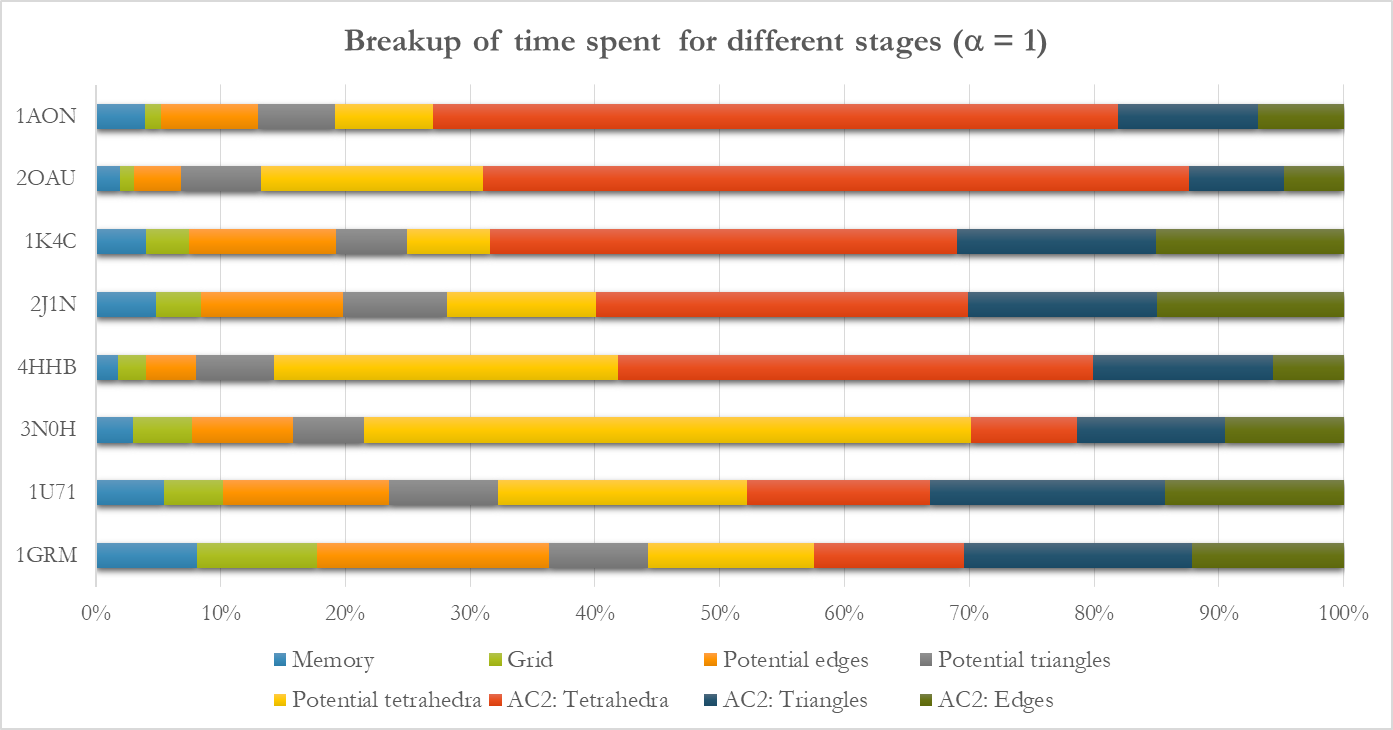}}
\caption[Proportion of time spent at different steps]{Proportion of time spent for different steps during the execution of the algorithm. The pruning stage~(AC2 tetrahedra, triangles and edges checks) takes significantly more effort compared to other steps. Also, the time spent for this step increases with $\alpha$.
}
\label{fig:timeSplitAll}
\end{figure}
\begin{figure}[ht!]
\centering
\includegraphics[width=10cm]{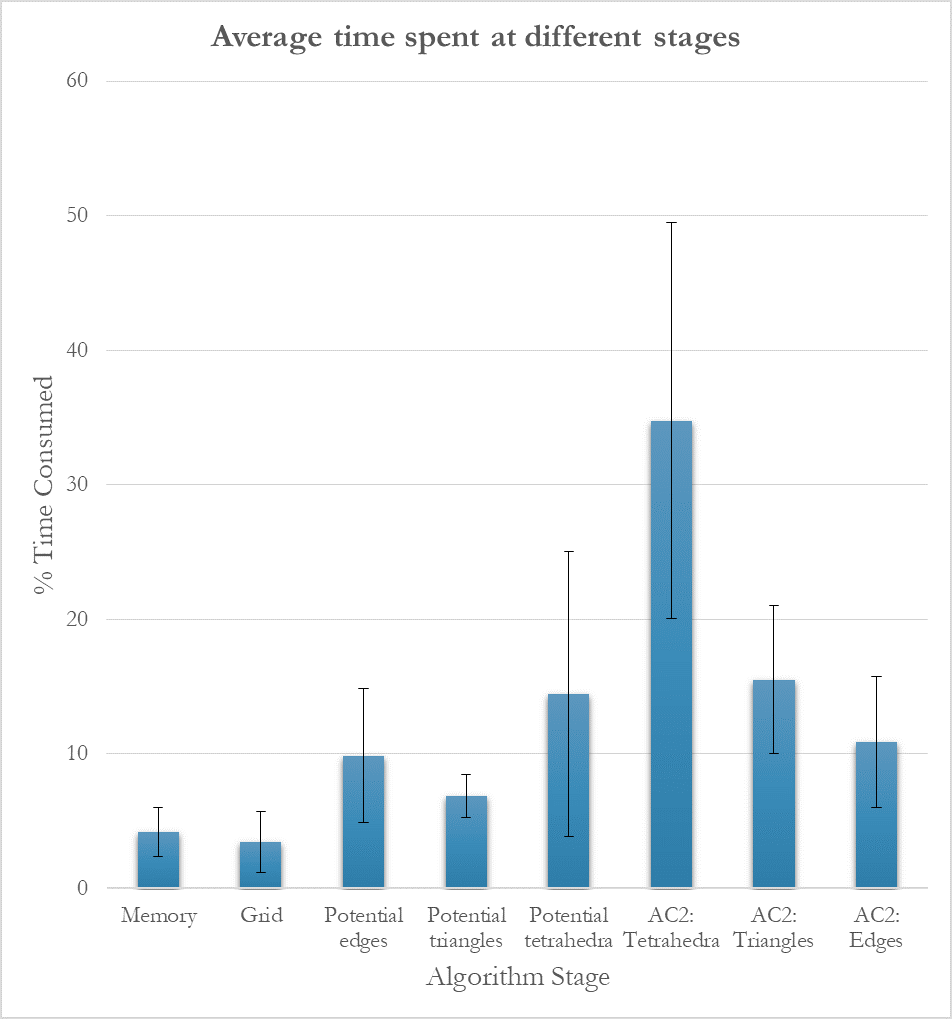}
\caption[Split up of time spent at different steps]{The average proportion of the execution time accounted for by various steps of computation. This average was taken over all the molecules in our dataset at various values of $\alpha$ varying from $0$ to $2.0$. As evident from the error bars, there is significant variability. But, in general the pruning stage of the algorithm, specially the tetrahedra computation step takes the maximum time.
}
\label{fig:timeSplitError}
\end{figure}
The two stages of our parallel algorithm (potential simplices and pruning) are further divided into three steps each, corresponding to the computation of edges, triangles, and tetrahedra respectively. A grid computation step precedes the two stages. We study the computation effort for each of these seven steps of the algorithm. We also report the time spent in memory transfers from CPU to GPU and vice-versa. Thus, we report the split up of the total runtime into eight categories namely, `Memory transfer', `Grid computation', `Potential edges', `Potential triangles', `Potential tetrahedra', `AC2 tetrahedra', `AC2 triangles' and `AC2 edges'. 

Table~\ref{tab:timeSplit} summarizes the observed split up of runtime for eight different biomolecules. Figure~\ref{fig:timeAll} shows the actual time spent for different steps and Figure~\ref{fig:timeSplitAll} shows relative time spent for each step. From these figures, it is clear that the pruning stage consumes the maximum amount of time. The pruning stage involves checking the neighboring balls for violations of the AC2 condition for each simplex. Specially, the tetrahedra-pruning step (red) takes approximately $25\%$ of the total time required for alpha complex computation.  

We  performed additional experiments to determine the average split up over multiple runs. We computed the relative time spent for each step for different values of $\alpha$ between 0.0 and 2.0. These observations are reported in Figure~\ref{fig:timeSplitError}. It is clear that the memory transfers and grid computation combined do not take more than $10\%$ of the total time. The potential-simplices-estimation stage consumes $30\%$ of the time. However, the pruning stage is most expensive, taking up roughly $60\%$ of the computation time. The pruning-tetrahedra step takes up $35\%$ of the time on average. This suggests that this step should be the focus of the optimization efforts in future. It should be noted that proportion of time spent for each step depends on the distribution of atoms in the biomolecule as well as the value of $\alpha$. This explains the significant deviation from the averages as shown by the error bars.

\subsection{Effect of the value of alpha} 
\begin{figure}[b!]
\centering
\includegraphics[width=12cm]{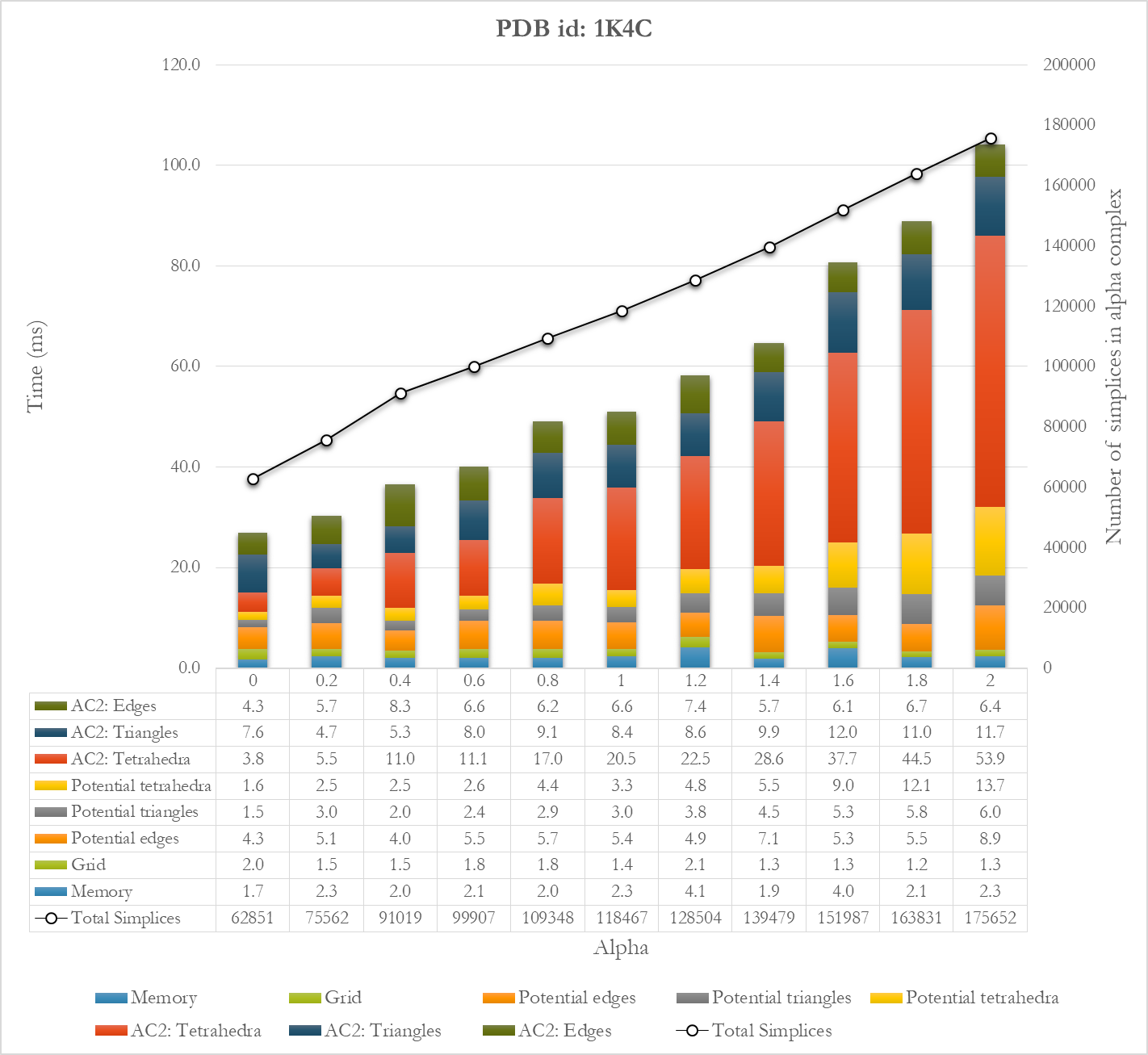}
\caption[Running time details for 1K4C]{Running time for varying values of $\alpha$ for 1K4C. The number of simplices in the output alpha complex is also shown (black line). The number of simplices increases almost linearly with $\alpha$ as expected from the distribution of atoms in typical biomolecules. The running time also increases almost linearly with $\alpha$ for this molecule. Also, the fraction of time spent for tetrahedra computation step~(red) increases with $\alpha$.
}
\label{fig:1K4C_times}
\end{figure}
\begin{figure}[ht!]
\centering
\includegraphics[width=12cm]{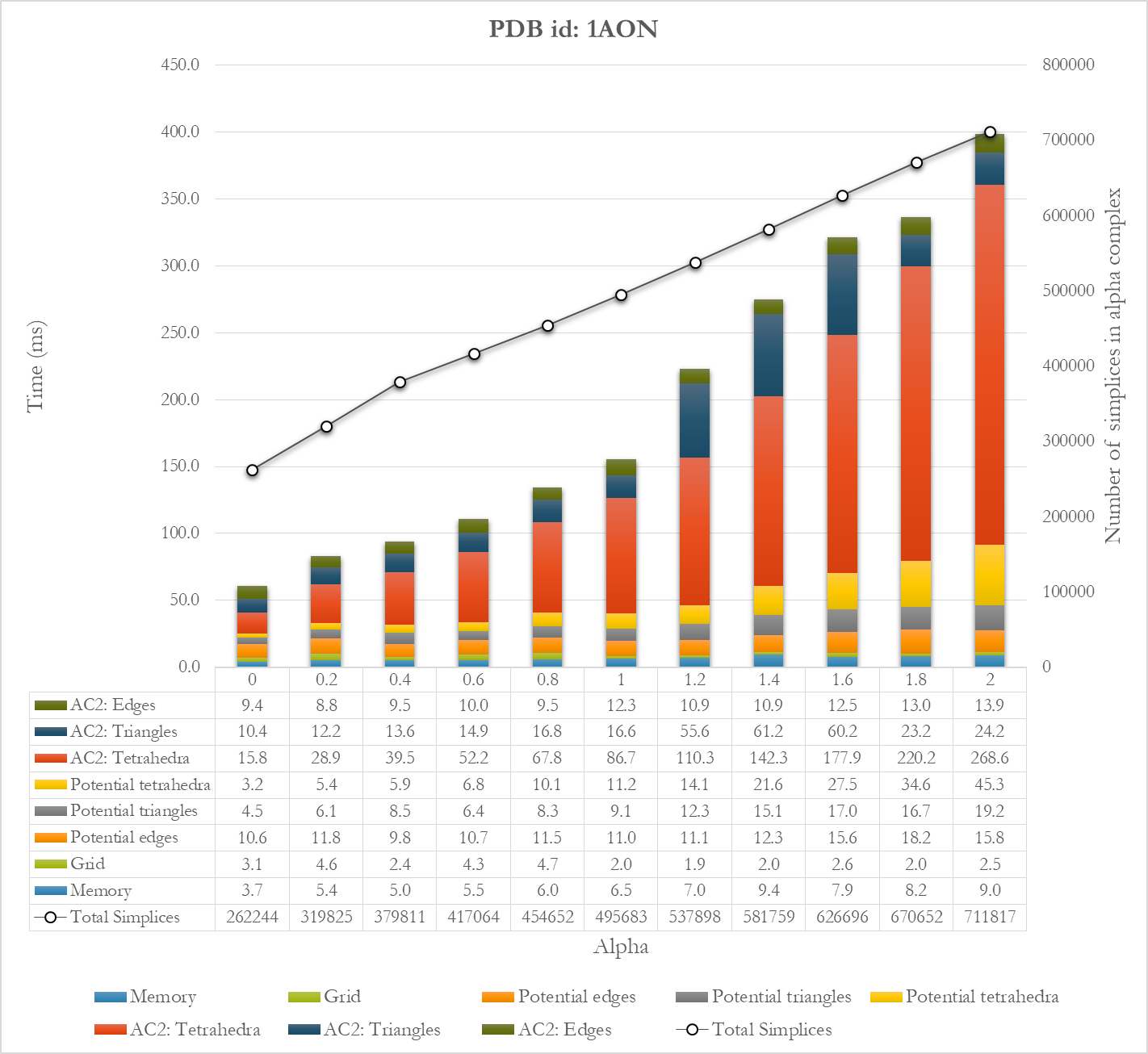}
\caption[Running time details for 1AON]{Running time for varying values of $\alpha$ for 1AON. The number of simplices in the output alpha complex is also shown (black line). The running time increases almost linearly with $\alpha$ for this molecule.
}
\label{fig:1AON_times}
\end{figure}
We also performed experiments to observe the runtime performance as the value of $\alpha$ is varied between 0.0 and 2.0. Figures~\ref{fig:1K4C_times} and~\ref{fig:1AON_times} show the results for the proteins 1K4C and 1AON, respectively. We also show how the number of simplices in the computed alpha complex increases as the value of $\alpha$ is increased. The runtime and total number of  simplices follow a near-linear trend. However, increase in time required for pruning, especially for pruning the tetrahedra, is greater than time required for other steps of the algorithm. Note that although both graphs appear linear, this is not guaranteed behavior for other input. The scaling behavior depends on the distribution of the atoms in the molecule and on the range of $\alpha$ values for which the experiment is conducted.

\subsection{Numerical issues}
\begin{table}[t!]
\centering
\caption{Incorrectly identified simplices of the alpha complex.}
%\resizebox{\columnwidth}{!}{
\begin{tabular}{ @{}crrrrrrrr@{} }
\toprule
 $\alpha$ & PDB id & $\#$Atoms & $\#$Simplices & \multicolumn{4}{c}{$\#$Misclassified Simplices} & Error rate\\
 \cmidrule{5-8}
 & & & & Edges & Triangles & Tetrahedra & Total & \\
\toprule
\multirow{8}{*}{0.0} 
& 1GRM & 260 & 932 & 0 & 0 & 0 & 0 & 0.0000\\
& 1U71 & 1505 & 5696 & 0 & 0 & 0 & 0 & 0.0000\\
& 3N0H & 1509 & 5739 & 0 & 0 & 0 & 0 & 0.0000\\
& 4HHB & 4384 & 38796 & 0 & 0 & 0 & 0 & 0.0000\\
& 2J1N & 8142 & 29642 & 0 & 0 & 0 & 0 & 0.0000\\
& 1K4C & 16068 & 62851 & 15 & 33 & 16 & 64 & 0.0010\\
& 2OAU & 16647 & 123175 & 12 & 21 & 5 & 38 & 0.0003\\
& 1AON & 58674 & 262244 & 22 & 39 & 21 & 82 & 0.0003\\
\midrule
\multirow{8}{*}{1.0} 
& 1GRM & 260 & 1598 & 0 & 0 & 0 & 0 & 0.0000\\
& 1U71 & 1505 & 10828 & 0 & 0 & 0 & 0 & 0.0000\\
& 3N0H & 1509 & 10965 & 0 & 0 & 0 & 0 & 0.0000\\
& 4HHB & 4384 & 65987 & 0 & 0 & 0 & 0 & 0.0000\\
& 2J1N & 8142 & 58205 & 0 & 0 & 0 & 0 & 0.0000\\
& 1K4C & 16068 & 118467 & 20 & 34 & 14 & 68 & 0.0006\\
& 2OAU & 16647 & 199101 & 10 & 22 & 10 & 42 & 0.0002\\
& 1AON & 58674 & 495683 & 10 & 26 & 21 & 57 & 0.0001\\
\bottomrule
\end{tabular}
%}
\label{tab:wrongSimplices}
\end{table}
The proposed algorithm requires computation of $\mathsf{OrthoSize}$ for each simplex, which in turn requires solving systems of linear equations. These computations require higher precision than is available on the GPU. So, the results may contain numerical errors. These numerical errors ultimately manifest as misclassification of a simplex as belonging to $K_\alpha$ or not. We performed extensive experimentation and observed that the alpha complex was correctly computed in several cases. In cases where the results were not correct, the number of false positives and negatives (extra or missing simplices) is extremely small as compared to the total number of simplices in the alpha complex. We observed a worst case error rate of $0.001$  in our experiments, see Table~\ref{tab:wrongSimplices}. This error rate is tolerable for several applications. If exact computation is required, we could use a tolerance threshold to tag some simplices as requiring further checks, which can in turn be performed on the CPU or GPU using a multi-precision library~\cite{joldes2016campary}.
The use of multi-precision libraries will adversely affect the computation time. Our implementation can be extended to use such a hybrid strategy. However, it will require additional experimentation to identify appropriate tolerance thresholds and performance optimization. We plan to implement this in future.

\section{Conclusions}
We proposed a novel parallel algorithm to compute the alpha complex for biomolecular data that does not require prior computation of the complete Delaunay triangulation. The useful characterization of simplices that belong to the alpha complex may be of independent interest. The algorithm was implemented using CUDA, which exploits the characteristics of the atom distribution in biomolecules to achieve speedups of up to a factor of 22 compared to the state-of-the-art parallel algorithm for computing the weighted Delaunay triangulation, and up to a factor of 50 speedup over the state-of-the-art implementation that is optimized for biomolecules. In future work, we plan to further improve the runtime efficiency of the parallel implementation and to resolve the numerical issues using real arithmetic. 

Applications of alpha complex outside the domain of biomolecular analysis often require the complete filtration of Delaunay complex. The algorithm as presented here is not best suited for such cases. However, the algorithm may be modified to utilize a previously computed alpha complex to efficiently compute the alpha complex for higher values of $\alpha$. We plan to investigate this extension in future work.

\bibliography{SoCGFullVersion}

\end{document}